\newacronym{vrcfr}{VR-MCCFR}{Variance Reduction MCCFR}
\newtheorem{theorem}{Theorem}
\newtheorem{lemma}{Lemma}
\newcommand{\E}{{ \mathop{\mathbb{E}} }}
\newcommand{\bE}{\mathbb{E}}
\newcommand{\bX}{\mathbf{X}}
\newcommand{\bY}{\mathbf{Y}}
\newcommand{\cA}{\mathcal{A}}
\newcommand{\cI}{\mathcal{I}}
\newcommand{\cH}{\mathcal{H}}
\newcommand{\cN}{\mathcal{N}}
\newcommand{\cZ}{\mathcal{Z}}
\newcommand{\defword}[1]{\textbf{\boldmath{#1}}}
\newcommand{\Var}{\mathbb{V}\text{ar}}
\newcommand{\Cov}{\mathbb{C}\text{ov}}
\definecolor{darkgreen}{RGB}{0,125,0}
\newcounter{mlNoteCounter}
\newcounter{rkNoteCounter}
\begin{document}

\title{Variance Reduction in Monte Carlo Counterfactual Regret Minimization (VR-MCCFR) for Extensive Form Games using Baselines}
\author{Martin Schmid$^{1}$, Neil Burch$^{1}$, Marc Lanctot$^{1}$, Matej Moravcik$^{1}$, Rudolf Kadlec$^{1}$, Michael Bowling$^{1, 2}$ \\
DeepMind$^1$ \\
University of Alberta$^2$ \\
\{mschmid,burchn,lanctot,moravcik,rudolfkadlec,bowlingm\}@google.com
}
\maketitle

\begin{abstract}
Learning strategies for imperfect information games from samples of interaction is a challenging problem. A common method for this setting,  Monte Carlo Counterfactual Regret Minimization (MCCFR), can have slow long-term convergence rates due to high variance. In this paper, we introduce a variance reduction technique (VR-MCCFR) that applies to any sampling variant of MCCFR. Using this technique, per-iteration estimated values and updates are reformulated as a function of sampled values and state-action baselines, similar to their use in policy gradient reinforcement learning.
The new formulation allows estimates to be bootstrapped from other estimates within the same episode, propagating the benefits of baselines along the sampled trajectory; the estimates remain unbiased even when bootstrapping from other estimates. Finally, we show that given a perfect baseline, the variance of the value estimates can be reduced to zero.
Experimental evaluation shows that VR-MCCFR brings an order of magnitude speedup, while the empirical variance decreases by three orders of magnitude.
The decreased variance allows for the first time CFR+ to be used with sampling, increasing the speedup to two orders of magnitude.

\end{abstract}

%\MATEJFIRST{
% \section{Abstract}
%
%}

\section{Introduction}

Policy gradient algorithms have shown remarkable success in single-agent reinforcement learning (RL)~\cite{Mnih2016asynchronous,schulman2017proximal}. While there has been evidence
of empirical success in multiagent problems~\cite{Foerster18,Bansal18}, the assumptions made by RL methods generally do not hold in multiagent partially-observable environments. Hence, they are not guaranteed to find an optimal policy, even with tabular representations in two-player zero-sum (competitive) games~\cite{Littman94markovgames}.
As a result, policy iteration algorithms based on computational game theory and regret minimization have been the preferred formalism in this setting. 
Counterfactual regret minimization~\cite{CFR} has been a core component of this progress in Poker AI, leading to solving Heads-Up Limit Texas Hold'em~\cite{Bowling15Poker} and defeating professional poker players in No-Limit~\cite{Moravcik17DeepStack,Brown17Libratus}.

\begin{figure}[t!]
\includegraphics[width=0.99\hsize]{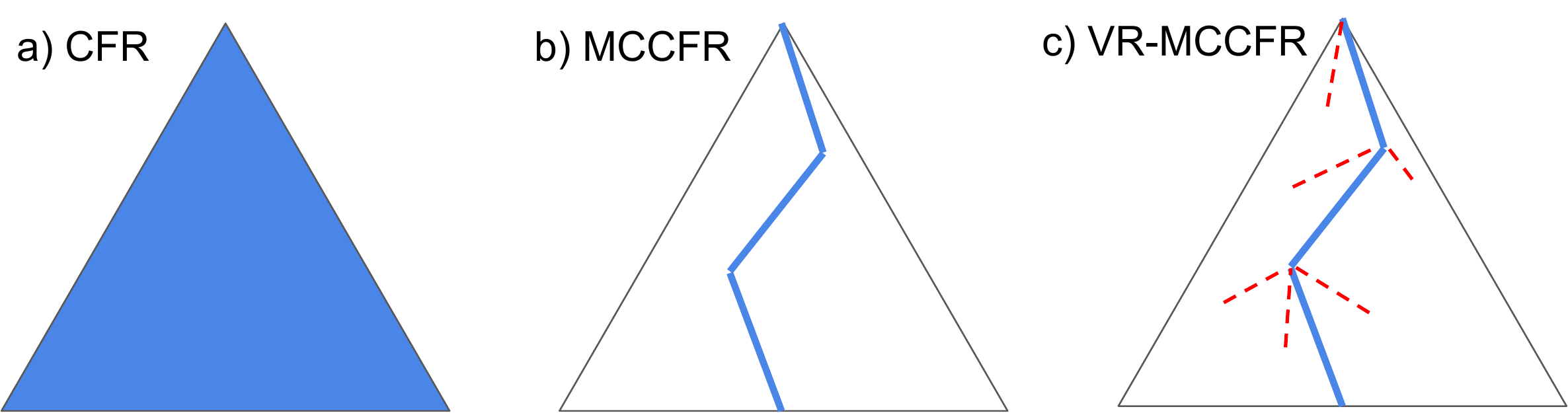}
\caption{High-level overview of \gls{vrcfr} and related methods. 
a) CFR traverses the entire tree on every iteration.
b) MCCFR samples trajectories and computes the values only for the sampled actions, while the off-trajectory actions are treated as zero-valued.
While MCCFR uses importance sampling weight to ensure the values are unbiased, the sampling introduces high variance.
c) \gls{vrcfr}  follows the same sampling framework as MCCFR, but uses baseline values for both sampled actions (in blue) as well as the off-trajectory actions (in red).
These baselines use control variates and send up bootstrapped estimates to decrease the per-iteration variance thus speeding up the convergence.
}
\label{fig:high_level}
\end{figure}

The two fields of RL and computational game theory have largely grown independently. However, there has been recent work that relates approaches within these two communities. Fictitious self-play uses RL to compute approximate best responses and supervised learning to combine responses~\cite{Heinrich15FSP}. This idea is extended to a unified training framework that can produce more general policies by regularizing over generated response oracles~\cite{Lanctot17PSRO}.
RL-style regressors were first used to compress regrets in game theorietic algorithms~\cite{Waugh15solving}.
DeepStack introduced deep neural networks as generalized value-function approximators for online planning in imperfect information games~\cite{Moravcik17DeepStack}. These value functions operate on a belief-space over all possible states consistent with the players' observations.

This paper similarly unites concepts from both fields, proposing an unbiased variance reduction technique for Monte Carlo counterfactual regret minimization using an analog of state-action baselines from actor-critic RL methods.
While policy gradient methods typically involve Monte Carlo estimates, the analog in imperfect information settings is Monte Carlo Counterfactual Regret Minimization (MCCFR)~\cite{Lanctot09mccfr}.
Policy gradient estimates based on a single sample of an episode suffer significantly from variance. 
A common technique to decrease the variance is a state or state-action dependent baseline value that is subtracted from the observed return.
These methods can drastically improve the convergence speed.
However, no such methods are known for MCCFR.

MCCFR is a sample based algorithm in imperfect information settings, which approximates counterfactual regret minimization (CFR) by estimating regret quantities necessary for updating the policy.
While MCCFR can offer faster short-term convergence than original CFR in large games, it suffers from high variance which leads to slower long-term convergence.
%We introduce unbiased techniques for state and state-action baselines for sampled counterfactual values.

CFR+ provides significantly faster empirical performance and made solving Heads-Up Limit Texas Hold'em possible~\cite{Bowling15Poker}.
Unfortunately, CFR+ has so far did not outperform CFR in Monte Carlo settings \cite{burch2017time} (also see Figure (~\ref{fig:mccfr_vs_mccfr_plus}) in the appendix for an experiment).

In this work, we reformulate the value estimates using a control variate and a state-action baseline. The new formulation includes any approximation of the counterfactual values, which allows for a range of different ways to insert domain-specific knowledge (if available) but also to design values that are learned online.

Our experiments show two orders of magnitude improvement over MCCFR.
For the common testbed imperfect information game -- Leduc Poker -- \gls{vrcfr} with a state-action baseline needs 250 times fewer iterations than MCCFR to reach the same solution quality.
In contrast to RL algorithms in perfect information settings, where state-action baselines bring little to no improvement over state baselines~\cite{tucker2018mirage}, state-action baselines lead to significant improvement over state baselines in multiagent partially-observable settings.
We suspect this is due to variance from the environment and different dynamics of the policies during the computation.

\section{Related Work}

There are standard variance reduction techniques for Monte Carlo sampling methods~\cite{Owen13MCBook} and the use of control variates in these settings has a long history~\cite{boyle1977options}. 
Reducing variance is particularly important when estimating gradients from sample trajectories. 
Consequentially, the use of a control variates using baseline has become standard practice in policy gradient methods~\cite{Williams92,SuttonBarto17}. In RL, action-dependent baselines have recently shown promise~\cite{Wu18,liu2018action} but the degree to which variance is indeed reduced remains unclear~\cite{tucker2018mirage}. We show that in our setting of MCCFR in imperfect information multiplayer games, action-dependent baselines necessarily influence the variance of the estimates, and we confirm the reduction empirically. This is important because lower-variance estimates lead to better regret bounds~\cite{Gibson12probing}.

There have been a few uses of variance reduction techniques in multiplayer games, within Monte Carlo tree search (MCTS). In MCTS, control variates have used to augment the reward along a trajectory using a property of the state before and after a transition~\cite{Veness11variance} and to augment the outcome of a rollout from its length or some pre-determined quality of the states visited~\cite{Pepels14Quality}.

Our baseline-improved estimates are similar to the ones used in AIVAT~\cite{AAAI1817316}. AIVAT defines estimates of expected values using heuristic values of states as baselines in practice. Unlike this work, AIVAT was only used for evaluation of strategies.

To the best of our knowledge, there has been two applications of variance reduction in Monte Carlo CFR: by manipulating the chance node distribution~\cite[Section 7.5]{Lanctot13phdthesis} and by sampling (``probing'') more trajectories for more estimates of the underlying values~\cite{Gibson12probing}. The variance reduction (and resulting drop in convergence rate) is modest in both cases, whereas we show more than a two order of magnitude speed-up in convergence using our method.

\section{Background}

We start with the formal background necessary to understand our method. For details, see~\cite{ShohamLB09,SuttonBarto17}.

A two player \defword{extensive-form game} is tuple $(\cN, \cA, \cH, \cZ, \tau, u, \cI)$. 

$\cN = \{ 1, 2, c \}$ is a finite set of players, where $c$ is a special player called chance. $\cA$ is a finite set of actions. Players take turns choosing actions, which are composed into sequences called {\it histories}; the set of all valid histories is $\cH$, and the set of all terminal histories (games) is $\cZ \subseteq \cH$.
We use the notation $h' \sqsubseteq h$ to mean that $h'$ is a prefix sequence or equal to $h$.
Given a nonterminal history $h$, the player function $\tau : \cH \setminus \cZ \rightarrow \cN$ determines who acts at $h$. The utility function $u : (\cN \setminus \{ c \}) \times \mathcal{Z} \rightarrow [u_{\min}, u_{\max}] \subset \mathbb{R}$ assigns a payoff to each player for each terminal history $z \in \cZ$.

The notion of a state in imperfect information games requires groupings of histories: $\cI_i$ for some player $i \in \cN$ is a partition of $\{ h \in \cH~|~\tau(h) = i \}$ into parts $I \in \cI_i$ such that $h, h' \in I$ if player $i$ cannot distinguish $h$ from $h'$ given the information known to player $i$ at the two histories. We call these \defword{information sets}. For example, in Texas Hold'em poker, for all $I \in \cI_i$, the (public) actions are the same for all $h, h' \in I$, and $h$ only differs from $h'$ in cards dealt to the opponents (actions chosen by chance). For convenience, we refer to $I(h)$ as the information state that contains $h$.

At any $I$, there is a subset of legal actions $A(I) \subseteq \cA$.
To choose actions, each player $i$ uses a \defword{strategy} $\sigma_i: I \rightarrow \Delta(A(I))$, where $\Delta(X)$ refers to the set of probability distributions over $X$.
We use the shorthand $\sigma(h,a)$ to refer to $\sigma(I(h),a)$.
Given some history $h$, we define the \defword{reach probability} $\pi^\sigma(h) = \Pi_{h'a \sqsubset h} \sigma_{\tau(h')}(I(h'), a)$ to be the product of all action probabilities leading up to $h$. This reach probability contains all players' actions, but can be separated $\pi^\sigma(h) = \pi_i^\sigma(h) \pi_{-i}^\sigma(h)$ into player $i$'s actions' contribution and the contribution of the opponents' of player $i$ (including chance).

Finally, it is often useful to consider the \defword{augmented information sets}~\cite{Burch14CFRD}.
While an information set $I$  groups histories $h$ that player $i=\tau(h)$ cannot distinguish, an augmented information set groups histories that player $i$ can not distinguish, including these where $\tau(h) \neq i$.
For a history $h$, we denote an augmented information set of player $i$ as $I_i(h)$.
Note that the if $\tau(h) = i$ then $I_i(h) = I(h)$ and  $I(h) =  I_{\tau(h)}(h)$.

\subsection{Counterfactual Regret Minimization}

Counterfactual Regret (CFR) Minimization is an iterative algorithm that produces a sequence of strategies $\sigma^0, \sigma^1, \ldots, \sigma^T$, whose average strategy $\bar{\sigma}^T$ converges to an approximate Nash equilibrium as $T \rightarrow \infty$ in two-player zero-sum games~\cite{CFR}.
Specifically, on iteration $t$, for each $I$, it computes \defword{counterfactual values}.
Define $\cZ_I = \{ (h, z) \in \cH \times \cZ~|~ h \in I, h \sqsubseteq z\}$, and $u_i^{\sigma^t}(h,z) = \pi^{\sigma^t}(h, z) u_i(z)$. We will also sometimes use the short form $u^{\sigma}_i(h) = \sum_{z \in \cZ, h \sqsubseteq z} u^{\sigma}_i(h,z)$. A counterfactual value is:
\begin{align}
\label{eq:cfv}
v_i(\sigma^t, I) = \sum_{(h,z) \in \cZ_I} \pi^{\sigma^t}_{-i} (h) u_i^{\sigma^t}(h, z).
\end{align}
We also define an action-dependent counterfactual value,
\begin{align}
\label{eq:action-dep-cfv}
v_i(\sigma, I, a) = \sum_{(h,z) \in \cZ_I} \pi^{\sigma}_{-i}(ha) u^{\sigma}(ha,z),     
\end{align}
where $ha$ is the sequence $h$ followed by the action $a$.
The values are analogous to the difference in $Q$-values and $V$-values in RL, and indeed we
have $v_i(\sigma, I) = \sum_a \sigma(I,a) v_i(\sigma, I, a)$.
CFR then computes a \defword{counterfactual regret} for {\it not taking} $a$ at $I$:
\begin{align}
\label{eq:regret}
r^t(I, a) = v_i(\sigma^t, I, a) - v_i(\sigma^t, I),
\end{align}
This regret is then accumulated $R^T(I,a) = \sum_{t=1}^T r^t(I,a)$, which is used to update the strategies using \defword{regret-matching}~\cite{Hart00}:
\begin{align} \sigma^{T+1}(I,a) = \frac{(R^{T}(I,a))^+}{\sum_{a \in A(I)}
\label{eq:rm}
(R^{T}(I,a))^+},
\end{align}
where $(x)^+ = \max(x, 0)$,
or to the uniform strategy if $\sum_a (R^T(I,a))^+ = 0$.
CFR+ works by thresholding the quantity at each round~\cite{Tammelin15CFRPlus}: define $Q^0(I,a) = 0$ and $Q^T(I,a) = (Q^{T-1} + r^T(I,a))^+$; CFR+ updates the policy by replacing $R^T$ by $Q^T$  in equation~\ref{eq:rm}. In addition, it always alternates the regret updates of the players (whereas some variants of CFR update both players), and the average strategy places more (linearly increasing) weight on more recent iterations.

If for player $i$ we denote $u(\sigma) = u_i(\sigma_i, \sigma_{-i})$, and run CFR for $T$ iterations, then we can define the \defword{overall regret} of the strategies produced as:
\[
R_i^{T} = \max_{\sigma'_i} \sum_{t=1}^{T} \left( v_i(\sigma'_i, \sigma_{-i}^t) - v_i(\sigma^t) \right).
\]
CFR ensures that $R_i^T/T \rightarrow 0$ as $T \rightarrow \infty$. When two players minimize regret, the folk theorem then guarantees a bound on the distance to a Nash equilibrium as a function of $R_i^T/T$.

To compute $v_i$ precisely, each iteration requires traversing over subtrees under each $a \in A(I)$ at each $I$.
Next, we describe variants that allow sampling parts of the trees and using estimates of these quantities.

\subsection{Monte Carlo CFR}
Monte Carlo CFR (MCCFR) introduces sample estimates of the counterfactual values, by visiting and updating quantities over only part of the entire tree. MCCFR is a general family of algorithms: each instance defined by a specific sampling policy. For ease of exposition and to show the similarity to RL, we focus on \defword{outcome sampling}~\cite{Lanctot09mccfr}; however, our baseline-enhanced estimates can be used in all MCCFR variants.
A \defword{sampling policy} $\xi$ is defined in the same way as a strategy (a distribution over $A(I)$ for all $I$) with a restriction that $\xi(h,a) > 0$ for all histories and actions. Given a terminal history sampled with probability $q(z) = \pi^{\xi}(z)$, a \defword{sampled counterfactual value} $\tilde{v}_i(\sigma, I | z)$
\begin{align}
\label{eq:sampled-cfv}
= \tilde{v}_i(\sigma, h | z) = \frac{ \pi^{\sigma}_{-i}(h) u_i^{\sigma}(h,z)}{q(z)}, \mbox{ for } h \in I, h \sqsubseteq z, 
\end{align}
and $0$ for histories that were not played, $h \not\sqsubseteq z$.
The estimate is unbiased:
$\E_{z \sim \xi}[\tilde{v}_i(\sigma, I | z)] = v_i(\sigma, I)$, by~\cite[Lemma 1]{Lanctot09mccfr}.
As a result, $\tilde{v}_i$ can be used in Equation~\ref{eq:regret} to accumulate estimated regrets $\tilde{r}^t(I,a) = \tilde{v}_i(\sigma^t, I, a) - \tilde{v}_i(\sigma^t, I)$ instead.
The regret bound requires an additional term $\frac{1}{\min_{z \in \cZ}q(z)}$, which is exponential in the length of $z$ and similar observations have been made in RL~\cite{RUDDER}.
The main problem with the sampling variants is that they introduce variance that can have a significant effect on long-term convergence~\cite{Gibson12probing}. 

\subsection{Control Variates}

Suppose one is trying to estimate a statistic of a random variable, $X$, such as its mean, from samples $\bX = (X_1, X_2, \cdots, X_n)$. A crude Monte Carlo estimator is defined to be $\hat{X}^{mc} = \frac{1}{n} \sum_{i = 1}^n X_i$.
A \defword{control variate} is a random variable $Y$ with a known mean $\mu_Y = \bE[Y]$, that is paired with the original variable, such that samples are instead  of the form $(\bX, \bY)$~\cite{Owen13MCBook}. A new random variable is then defined, $Z_i = X_i + c (Y_i - \mu_Y)$.
An estimator $\hat{Z}^{cv} = \frac{1}{n} \sum_{i=1}^{n} Z_i$. Since $\bE[Z_i] = \bE[X_i]$ for any value of $c$, $\hat{Z}^{cv}$ can be used in place of $\hat{X}^{mc}$. 
with variance $\Var[Z_i] = \Var[X_i] + c^2 \Var[Y_i] + 2c \Cov[X_i,Y_i]$. So when $X$ and $Y$ are positively correlated and $c < 0$, variance is reduced when $\Cov[X,Y] > \frac{c^2}{2}\Var[Y]$.

\subsection{Reinforcement Learning Mapping}

There are several analogies to make between Monte Carlo CFR in imperfect information games and reinforcement learning. Since our technique builds on ideas that have been widely used in RL, we end the background by providing a small discussion of the links.

First, dynamics of an imperfect information game are similar to  a partially-observable episodic MDP without any cycles. 
Policies and strategies are identically defined, but in imperfect information games a deterministic optimal (Nash) strategy may not exist causing most of the RL methods to fail to converge. 
The search for a minmax-optimal strategy with several players is the main reason CFR is used instead of, for example, value iteration. However, both operate by defining values of states which are analogous (counterfactual values versus expected values) since they are both functions of the strategy/policy; therefore, can be viewed as a kind of policy iteration which computes the values and from which a policy is derived. However, the iterates $\sigma^t$ are not guaranteed to converge to the optimal strategy, only the average strategy $\bar{\sigma}^t$ does.

Monte Carlo CFR is an off-policy Monte Carlo analog. The value estimates are unbiased specifically because they are corrected by importance sampling. Most applications of MCCFR have operated with tabular representations, but this is mostly due to the differences in objectives. Function approximation methods have been proposed for CFR~\cite{Waugh15solving} but the variance from pure Monte Carlo methods may prevent such techniques in MCCFR. The use of baselines has been widely successful in policy gradient methods, so reducing the variance could enable the practical use of function approximation in MCCFR.

\section{Monte Carlo CFR with Baselines}
\label{sec:mccfrb}

We now introduce our technique: MCCFR with baselines. While the baselines are analogous to those from policy gradient methods (using counterfactual values), there are slight differences in their construction.

Our technique constructs value estimates using control variates.
Note that MCCFR is using sampled estimates of counterfactual values $\tilde{v}_i(\sigma, I)$ whose expected value is the counterfactual value $v_i(\sigma, I)$.
First, we introduce an \defword{estimated counterfactual} value $\hat{v}_i(\sigma, I)$ to be any estimator of the counterfactual value (not necessarily $\tilde{v}_i$ as defined above, but this is one possibility).

We now define an action-dependent \defword{baseline} $b_i(I, a)$ that, as in RL, serves as a basis for the sampled values.
The intent is to define a baseline function to approximate or be correlated with $\bE[\hat{v}_i(\sigma, I, a)]$.
We also define a sampled baseline $\hat{b}_i(I, a)$ as an estimator such that $\bE[\hat{b}_i(I, a)] = b_i(I, a)$.  
From this, we construct a new baseline-enhanced estimate for the counterfactual values:
\begin{align}
\label{eq:baseline-enhanced-v}
\widehat{v}^b_i(\sigma, I, a) = 
    \widehat{v}_i(\sigma, I, a) - \hat{b}_i(\sigma, I, a) +  b_i(\sigma, I, a)
\end{align}
First, note that $\hat{b}_i$ is a control variate with $c = -1$. Therefore, it is important that $\hat{b}_i$ be correlated with $\hat{v}_i$. 
The main idea of our technique is to replace $\tilde{v}_i(\sigma, I, a)$ with $\hat{v}^b_i(\sigma, I, a)$. 
A key property is that by doing so, the expectation remains unchanged.

\begin{figure*}
    \centering
    \begin{subfigure}[b]{0.34\textwidth}
        \includegraphics[width=\textwidth]{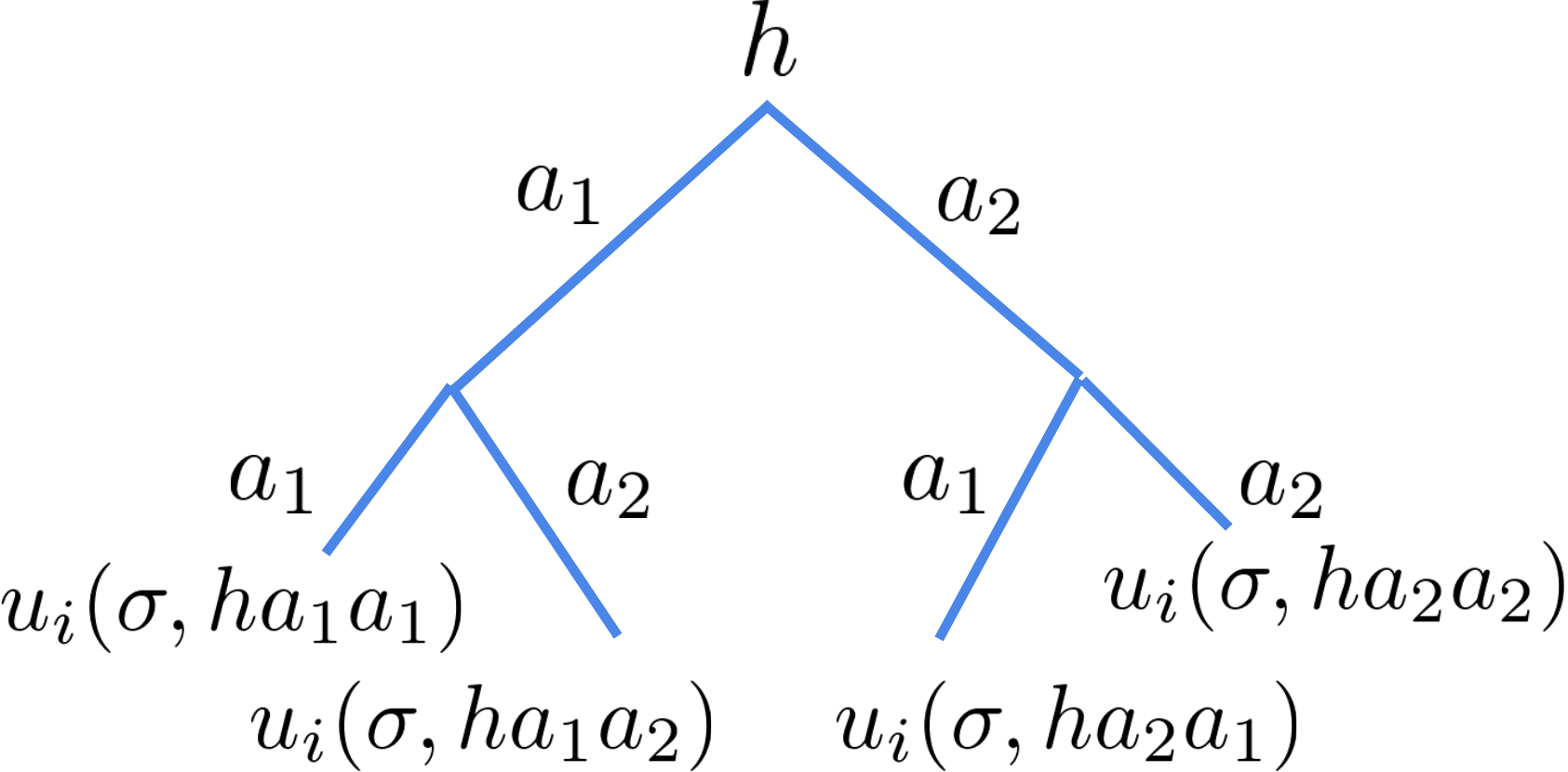}
        \caption{CFR}
        \label{fig:gull}
    \end{subfigure}
    \qquad%add desired spacing between images, e. g. ~, \quad, \qquad, \hfill etc. 
      %(or a blank line to force the subfigure onto a new line)
    \begin{subfigure}[b]{0.12\textwidth}
        \includegraphics[width=\textwidth]{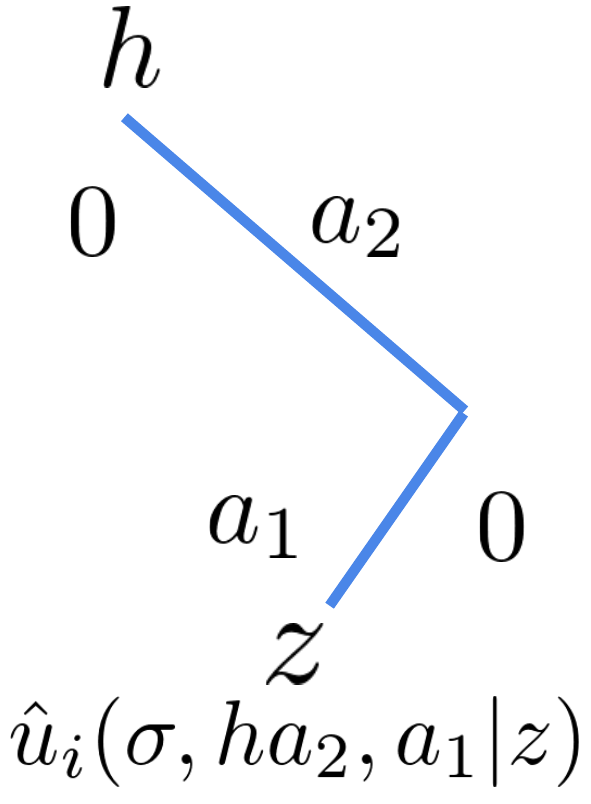}
        \caption{MCCFR}
        \label{fig:tiger}
    \end{subfigure}
    \qquad%add desired spacing between images, e. g. ~, \quad, \qquad, \hfill etc. 
    %(or a blank line to force the subfigure onto a new line)
    \begin{subfigure}[b]{0.30\textwidth}
        \includegraphics[width=\textwidth]{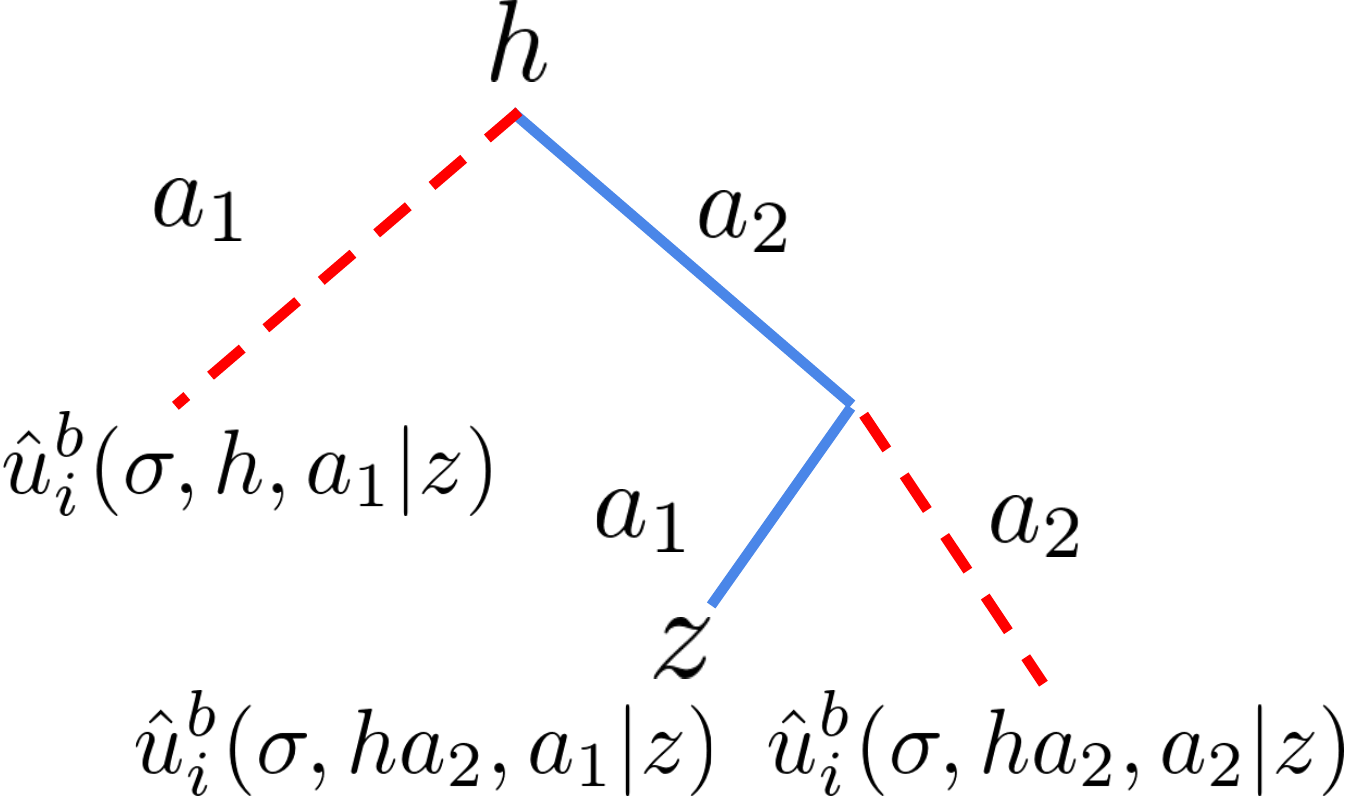}
        \caption{VR-MCCFR}
        \label{fig:mouse}
    \end{subfigure}
    \caption{Values and updates for the discussed methods: (a) CFR udpates the full tree and thus uses the exact values for all the actions,
    (b) MCCFR updates only a single path, and uses the sampled values for the sampled actions and zero values for the off-trajectory actions,
    (c) \gls{vrcfr} also updates only a single path, but uses the bootstrapped baseline-enhanced values for the sampled actions and baseline-enhanced values for the off-trajectory actions.
    }
    \label{fig:animals}
\end{figure*}

\begin{lemma}
\label{lemma:unbiasied}
For any $i \in \cN - \{c\}, \sigma_i, I \in \cI, a \in A(I)$, if $\bE[\hat{b}_i(I, a)] = b_i(I, a)$ and $\bE[\hat{v}_i(\sigma, I, a)] = v_i(\sigma, I, a)$, then $\bE[\hat{v}_i^b(\sigma, I,a)] = v_i(\sigma, I, a)$.
\end{lemma}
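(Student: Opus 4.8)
The plan is to prove this by a direct application of linearity of expectation, since the baseline-enhanced estimator in Equation~\ref{eq:baseline-enhanced-v} is constructed as a control variate precisely so that the correction terms cancel in expectation. First I would take the expectation of both sides of the defining identity
\begin{align*}
\widehat{v}^b_i(\sigma, I, a) = \widehat{v}_i(\sigma, I, a) - \hat{b}_i(\sigma, I, a) + b_i(\sigma, I, a),
\end{align*}
and distribute it across the three terms:
\begin{align*}
\bE[\widehat{v}^b_i(\sigma, I, a)] = \bE[\widehat{v}_i(\sigma, I, a)] - \bE[\hat{b}_i(\sigma, I, a)] + \bE[b_i(\sigma, I, a)].
\end{align*}

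The key observations I would then use are that the true baseline $b_i(\sigma, I, a)$ is a deterministic (non-random) quantity, so $\bE[b_i(\sigma, I, a)] = b_i(\sigma, I, a)$, and that by hypothesis the sampled baseline is unbiased for it, $\bE[\hat{b}_i(\sigma, I, a)] = b_i(\sigma, I, a)$. These two terms therefore cancel. Substituting the remaining hypothesis $\bE[\widehat{v}_i(\sigma, I, a)] = v_i(\sigma, I, a)$ leaves $\bE[\widehat{v}^b_i(\sigma, I, a)] = v_i(\sigma, I, a)$, which is exactly the claim.

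I do not expect a substantive obstacle here: the result is essentially a restatement of the standard fact that a control variate with coefficient $c = -1$ leaves the mean unchanged, specialized to the case where the variate's known mean $\mu_Y$ is supplied by the true baseline $b_i$. The only point requiring care is bookkeeping, namely keeping $\hat{b}_i$ and $b_i$ clearly separated as the sampled and true baselines, so that it is the unbiasedness assumption $\bE[\hat{b}_i] = b_i$ that drives the cancellation rather than any stronger pointwise equality, and confirming that the expectations are all taken over the same sampling distribution $z \sim \xi$ used to define $\hat{v}_i$ and $\hat{b}_i$.
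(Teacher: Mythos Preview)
Your proposal is correct and follows essentially the same approach as the paper: apply linearity of expectation to the defining identity, use the hypotheses $\bE[\hat{b}_i(I,a)] = b_i(I,a)$ and $\bE[\hat{v}_i(\sigma,I,a)] = v_i(\sigma,I,a)$ together with the fact that $b_i$ is deterministic, and cancel. The paper's proof is just the three-line version of exactly this computation.
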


The proof is in the appendix. 
As a result, any baseline whose expectation is known can be used and the baseline-enhanced estimates are consistent. However, not all baselines will decrease variance. For example, if $\Cov[\hat{v}_i, \hat{b}_i]$ is too low, then the $\Var[\hat{b}_i]$ term could dominate and actually increase the variance.

\subsection{Recursive Bootstrapping}

Consider the individual computation (\ref{eq:cfv}) %(\ref{eq:baseline-enhanced-v}) 
for all the information sets on the path to a sampled terminal history $z$.
Given that the counterfactual values up the tree can be computed from the counterfactual values down the tree, it is natural to consider propagating the already baseline-enhanced
counterfactual values (\ref{eq:baseline-enhanced-v}) rather than the original noisy sampled values - thus propagating the benefits up the tree.
The Lemma (\ref{lemma:bootstrapped-unbiased}) then shows that by doing so, the updates remain unbiased.
Our experimental section shows that such bootstrapping a crucial component for the proper performance of the method.

To properly formalize this bootstrapping computation, we must first recursively define the \defword{expected value}:

\begin{align}
\label{eq:bootstrapped-u}
\hat{u}_i(\sigma, h, a | z) = \left\{ \begin{array}{ll}
     \hat{u}_i(\sigma, ha|z) / {\xi(h,a)}   & \mbox{if $ha \sqsubseteq z$} \\
     0 & \mbox{otherwise}\end{array} \right.,
\end{align}
and 
\begin{align}
\label{eq:bootstrapped-u-history}
\hat{u}_i(\sigma, h | z) = \left\{ \begin{array}{ll}
u_i(h) & \mbox{if $h = z$} \\
\sum_{a} \sigma(h,a) \hat{u}_i(\sigma, h, a | z) & \mbox{if $h \sqsubset z$} \\
0 & \mbox{otherwise}\end{array}\right..
\end{align}

Next, we define a baseline-enhanced version of the expected value. 
Note that the baseline $b_i(I, a)$ can be arbitrary, but we discuss a particular choice and update of the baseline in the later section.
For every action, given a specific sampled trajectory $z$, then $\hat{u}^b_i(\sigma, h, a | z) =$

\begin{align}
\label{eq:bootstrapped-ub}
 \left\{ \begin{array}{ll}
 b_i(I_i(h), a) + \frac{\hat{u}^b_i(\sigma, ha | z) - b_i(I_i(h), a)}{\xi(h,a)} & \mbox{if $ha \sqsubseteq z$} \\
 b_i(I_i(h), a) & \mbox{if $h \sqsubset z$, $ha \not\sqsubseteq z$} \\
 0 & \mbox{otherwise}\end{array} \right.
\end{align}
and
\begin{align}
\label{eq:bootstrapped-ub-history}
\hat{u}^b_i(\sigma, h | z) = \left\{ \begin{array}{ll}
  u_i(h) & \mbox{if $h = z$} \\
  \sum_{a} \sigma(h,a) \hat{u}^b_i(\sigma, h, a | z) & \mbox{if $h \sqsubset z$} \\
  0 & \mbox{otherwise} \end{array}\right..
\end{align}

These are the values that are bootstrapped.
We estimate counterfactual values needed for the regret updates using these values as:

\begin{align}
\label{eq:bootstrapped-vb}
\hat{v}^b_i(\sigma, I(h), a | z) = \hat{v}^b_i(\sigma, h, a | z) = \frac{\pi^{\sigma}_{-i}(h)}{q(h)} \hat{u}^b_i(\sigma, h, a | z).
\end{align}

We can now formally state that the bootstrapping keeps the counterfactual values unbiased:

\begin{lemma}
% Bootstrapped formulation is also unbiased.
\label{lemma:bootstrapped-unbiased}
Let $\hat{v}_i^b$ be defined as in Equation~\ref{eq:bootstrapped-vb}. Then, for any $i \in \cN - \{c\}, \sigma_i, I \in \cI, a \in A(I)$, it holds that $\bE_z[\hat{v}_i^b(\sigma, I,a | z)] = v_i(\sigma, I, a)$.
\end{lemma}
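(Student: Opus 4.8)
The plan is to reduce the statement about the counterfactual-value estimate $\hat{v}^b_i$ to a statement about the recursively-defined expected-value estimate $\hat{u}^b_i$, and then to prove the latter by induction on the game tree from the terminal histories upward. Concretely, by the definition in Equation~\ref{eq:bootstrapped-vb} and linearity of expectation, and using that along any single sampled trajectory $z$ at most one history $h \in I$ satisfies $h \sqsubseteq z$ (so $\hat{u}^b_i(\sigma, h, a \mid z) = 0$ for every other $h \in I$), I would write
\begin{align*}
\bE_z[\hat{v}^b_i(\sigma, I, a \mid z)] = \sum_{h \in I} \frac{\pi^\sigma_{-i}(h)}{q(h)} \, \bE_z[\hat{u}^b_i(\sigma, h, a \mid z)].
\end{align*}
Since, by Equation~\ref{eq:action-dep-cfv} and the fact that action $a$ at $h$ is player $i$'s own action (so it does not change $\pi^\sigma_{-i}$), one has $v_i(\sigma, I, a) = \sum_{h \in I} \pi^\sigma_{-i}(h)\, u^\sigma_i(ha)$, it suffices to prove the key identity $\bE_z[\hat{u}^b_i(\sigma, h, a \mid z)] = q(h)\, u^\sigma_i(ha)$.

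I would prove this key identity jointly with its history-level counterpart $\bE_z[\hat{u}^b_i(\sigma, h \mid z)] = q(h)\, u^\sigma_i(h)$ by backward induction on $h$. The base case is a terminal $h=z$: here $\hat{u}^b_i(\sigma, h \mid z) = u_i(h)$ is nonzero only on the single trajectory that samples $h$, which occurs with probability $q(h)$, and $u^\sigma_i(h) = u_i(h)$, so both sides agree. For the inductive step at a nonterminal $h$, I would split $\bE_z[\hat{u}^b_i(\sigma, h, a \mid z)]$ according to the three cases of Equation~\ref{eq:bootstrapped-ub}, using that conditioned on reaching $h$ the trajectory takes action $a$ with probability $\xi(h,a)$, so that $\Pr[ha \sqsubseteq z] = q(h)\xi(h,a)$ and $\Pr[h \sqsubset z,\ ha \not\sqsubseteq z] = q(h)(1-\xi(h,a))$. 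Substituting the inductive hypothesis $\bE_z[\hat{u}^b_i(\sigma, ha \mid z)] = q(h)\xi(h,a)\, u^\sigma_i(ha)$ into the on-trajectory branch and collecting the terms carrying the baseline, the coefficient of $b_i(I_i(h),a)$ is $q(h)\xi(h,a) - q(h) + q(h)(1-\xi(h,a)) = 0$, so the baseline cancels in expectation and the branch contributes exactly $q(h)\, u^\sigma_i(ha)$. The history-level identity then follows from Equation~\ref{eq:bootstrapped-ub-history} by summing over $a$ with weights $\sigma(h,a)$ and using $u^\sigma_i(h) = \sum_a \sigma(h,a)\, u^\sigma_i(ha)$.

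The main obstacle is that, unlike in the non-bootstrapped Lemma~\ref{lemma:unbiasied}, the control variate $b_i(I_i(h),a)$ is not matched against a quantity with a fixed, externally-known mean: the value it is subtracted from, $\hat{u}^b_i(\sigma, ha \mid z)$, is itself a recursively-bootstrapped random variable. The argument therefore cannot treat each baseline in isolation; it must invoke the induction hypothesis to supply the mean of the down-tree estimate and then verify that the importance weight $1/\xi(h,a)$ on the on-trajectory baseline term exactly offsets the off-trajectory baseline contribution from the case $ha \not\sqsubseteq z$. Securing this cancellation simultaneously for every action $a$, so that no residual bias accumulates as the recursion propagates up the sampled path, is the crux, and it is exactly what the weighting by $\xi$ in Equation~\ref{eq:bootstrapped-ub} is designed to deliver. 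A minor additional item to check is the sampling bookkeeping $q(ha) = q(h)\,\xi(h,a)$, together with the observation that at most one history of $I$ lies on any trajectory, which is what keeps the reduction in the first paragraph exact.
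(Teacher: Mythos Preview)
Your proof is correct and follows the same inductive skeleton as the paper: both arguments climb the tree from the leaves, and both hinge on the observation that the importance weight $1/\xi(h,a)$ on the on-trajectory baseline term exactly cancels the off-trajectory baseline contribution, so the coefficient of $b_i(I_i(h),a)$ vanishes in expectation.

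Where you differ is in the target of the induction. The paper first shows (its Lemma~\ref{lemma:remove_baseline} and Lemma~\ref{lemma:exp-basline-u}) that $\bE_z[\hat{u}^b_i(\sigma,h,a|z)] = \bE_z[\hat{u}_i(\sigma,h,a|z)]$, i.e.\ that the baseline-enhanced estimator has the same mean as the plain sampled estimator $\hat{u}_i$, and only then invokes the external unbiasedness result for MCCFR (\cite[Lemma~1]{Lanctot09mccfr}) to conclude. You instead compute the expectation in closed form, $\bE_z[\hat{u}^b_i(\sigma,h,a|z)] = q(h)\,u^\sigma_i(ha)$, and finish by the explicit sum $\sum_{h\in I}\pi^\sigma_{-i}(h)\,u^\sigma_i(ha)=v_i(\sigma,I,a)$. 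Your route is a little more direct and self-contained: it never needs the intermediate non-baseline estimator $\hat{u}_i$ or the cited lemma, and it makes the role of the perfect-recall fact (at most one $h\in I$ lies on any trajectory) explicit. The paper's route, on the other hand, cleanly isolates the statement ``adding the baseline does not change the mean'' as a standalone lemma, which is conceptually tidy and reusable.
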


The proof is in the appendix. 
% This formulation inspires the recursive formulation of the values presented below. 
Since each estimate builds on other estimates, the benefit of the reduction in variance can be propagated up through the tree.

Another key result is that there exists a perfect baseline that leads to zero-variance estimates at the updated information sets.
\begin{lemma}
\label{lemma:zero-variance}
There exists a perfect baseline $b^*$ and optimal unbiased estimator $\hat{v}^*_i(\sigma, h, a)$ such that under a specific update scheme:  $\Var_{h,z \sim \xi, h \in I, h \sqsubseteq z}[\hat{v}^*_i(\sigma, h, a|z)] = 0$.
\end{lemma}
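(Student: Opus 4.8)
The plan is to exhibit the perfect baseline explicitly as the true (counterfactual-free) expected value and then show, by induction up the sampled trajectory, that the bootstrapped estimator collapses to this deterministic quantity. Concretely, I would take the perfect baseline to be $b^*_i(I_i(h), a) = u_i^{\sigma}(ha)$, the exact expected value of playing $a$ at $h$ and following $\sigma$ thereafter, and let $\hat{u}^*_i$ and $\hat{v}^*_i$ be the bootstrapped estimates of Equations~\ref{eq:bootstrapped-ub}--\ref{eq:bootstrapped-vb} instantiated with this baseline. The target claim is that for every sampled $z$ and every $h \sqsubseteq z$ one has $\hat{u}^*_i(\sigma, h, a \mid z) = u_i^{\sigma}(ha)$ and $\hat{u}^*_i(\sigma, h \mid z) = u_i^{\sigma}(h)$, i.e. the estimates are constant over the randomness in $z$; zero variance is then immediate.

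First I would establish this recovery claim by backward induction on the tree, from terminal histories toward the root. The base case is $h = z$, where $\hat{u}^*_i(\sigma, z \mid z) = u_i(z) = u_i^{\sigma}(z)$ by definition. For the inductive step, suppose $\hat{u}^*_i(\sigma, ha \mid z) = u_i^{\sigma}(ha)$ whenever $ha \sqsubseteq z$. The crucial observation is that the importance-weighted correction in Equation~\ref{eq:bootstrapped-ub} has numerator $\hat{u}^*_i(\sigma, ha \mid z) - b^*_i(I_i(h),a) = u_i^{\sigma}(ha) - u_i^{\sigma}(ha) = 0$; hence on the sampled branch ($ha \sqsubseteq z$) the $1/\xi(h,a)$ factor multiplies zero and the estimate reduces to the baseline $u_i^{\sigma}(ha)$, while on every off-trajectory action ($h \sqsubset z$, $ha \not\sqsubseteq z$) the estimate is defined to be exactly $b^*_i(I_i(h),a) = u_i^{\sigma}(ha)$. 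In both cases $\hat{u}^*_i(\sigma, h, a \mid z) = u_i^{\sigma}(ha)$ with no dependence on which action was sampled, and feeding this into Equation~\ref{eq:bootstrapped-ub-history} gives $\hat{u}^*_i(\sigma, h \mid z) = \sum_a \sigma(h,a) u_i^{\sigma}(ha) = u_i^{\sigma}(h)$, completing the induction.

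With the recovery claim in hand, $\hat{u}^*_i(\sigma, h, a \mid z)$ is a deterministic constant for each fixed $h$, so $\hat{v}^*_i(\sigma, h, a \mid z) = \frac{\pi^{\sigma}_{-i}(h)}{q(h)} u_i^{\sigma}(ha)$ carries no variance from the continuation of the trajectory, which is the zero-variance conclusion. I expect the main obstacle to be the precise sense of ``at the updated information sets'' in the presence of the reach-weight: since an information set may contain several histories $h$, both the factor $\pi^{\sigma}_{-i}(h)/q(h)$ and the continuation value $u_i^{\sigma}(ha)$ can vary across $h \in I$, whereas the baseline $b_i(I_i(h),a)$ is nominally a function of the augmented information set alone. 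This is exactly where the ``specific update scheme'' must be pinned down — I would fix it so that the per-history estimate is accumulated conditioned on the reached $h$ (equivalently, choosing the sampling/update so the reach correction is absorbed and the baseline is taken at the idealized per-history value), reducing the claim to the fixed-$h$ statement proved above. Throughout I would check consistency with Lemma~\ref{lemma:bootstrapped-unbiased}, which guarantees $\E_z[\hat{v}^*_i(\sigma, I, a \mid z)] = v_i(\sigma, I, a)$ regardless of the baseline, so that the zero-variance estimator is simultaneously unbiased and hence genuinely optimal.
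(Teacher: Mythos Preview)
Your proposal is correct and matches the paper's proof essentially step for step: the paper also takes the oracle baseline to be $b_i^*(h,a) = u_i^{\sigma}(ha)$, proves by induction on distance to a terminal that $\hat{u}_i^{b^*}(\sigma,h,a|z) = u_i^{\sigma}(ha)$ for all $z$ with $h \sqsubseteq z$ (the correction term vanishes because its numerator is zero, exactly as you argue), and then observes that $\hat{v}_i^*(\sigma,h,a|z) = \frac{\pi_{-i}^{\sigma}(h)}{q(h)} u_i^{\sigma}(ha)$ depends only on $h$, not on $z$. Your instinct about the ``specific update scheme'' is also on target: the paper resolves the information-set-versus-history tension simply by defining the oracle baseline \emph{over histories} rather than over augmented information sets, and reads the variance statement as conditional on the reached $h$.
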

The proof and description of the update scheme are in the appendix. We will refer to $b^*$ as the \defword{oracle baseline}. Note that even when using the oracle baseline, the convergence rate of MCCFR is still not identical to CFR because each iteration applies regret updates to a portion of the tree, whereas CFR updates the entire tree.

Finally, using unbiased estimates to tabulate regrets $\hat{r}(I,a)$ for each $I$ and $a$ leads to a probabilistic regret bound:
\begin{theorem}{\cite[Theorem 2]{Gibson12probing}}
\label{theorem:gibson}
For some unbiased estimator of the counterfactual values $\hat{v}_i$ and a bound on the difference in its value $\hat{\Delta}_i = |\hat{v}_i(\sigma, I, a) - \hat{v}_i(\sigma, I, a')|$, with probability 1-$p$, $\frac{R_i^T}{T}$
\[
\leq \left( \hat{\Delta}_i + \frac{\sqrt{ \max_{t, I, a} \Var[r_i^t(I,a) - \hat{r}_i^t(I,a)] }}{\sqrt{p}} \right) \frac{|\cI_i| |\cA_i|}{\sqrt{T}}.
\]
\end{theorem}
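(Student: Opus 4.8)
The plan is to write the true overall regret as the sum of the regret actually minimized by regret matching on the \emph{estimated} regrets plus a zero-mean discrepancy term, and then to control that discrepancy with a variance-based concentration inequality. The only place Monte Carlo estimation enters is through the discrepancy, so the unbiasedness hypothesis $\bE[\hat v_i]=v_i$ (established for our bootstrapped estimator in Lemma~\ref{lemma:bootstrapped-unbiased}) is exactly what makes the discrepancy a mean-zero object, while its variance is what surfaces in the bound.

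First I would invoke the standard counterfactual-regret decomposition of \cite{CFR}: the overall regret is bounded by the sum over information sets of the positive part of the per-information-set regret, $R_i^T \le \sum_{I \in \cI_i} \max_{a}\,(R^T(I,a))^+$, with $R^T(I,a)=\sum_{t=1}^T r^t(I,a)$. At each $(I,a)$ I would add and subtract the accumulated estimated regret $\hat R^T(I,a)=\sum_t \hat r^t(I,a)$, splitting $R^T(I,a) = \hat R^T(I,a) + \sum_t\big(r^t(I,a)-\hat r^t(I,a)\big)$. Because the policy is updated by regret matching applied to the \emph{estimated} regrets, the first piece is controlled by the Hart--Mas-Colell bound \cite{Hart00}: $\hat R^T(I,a) \le \hat\Delta_i\sqrt{|A(I)|\,T}$, where the per-step range is bounded by $\hat\Delta_i$. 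Summing over $\cI_i$ and bounding $\sqrt{|A(I)|}\le|\cA_i|$ produces the $\hat\Delta_i\,|\cI_i||\cA_i|/\sqrt{T}$ term after dividing by $T$.

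The remaining work is to bound the discrepancy $D^T(I,a)=\sum_t\big(r^t(I,a)-\hat r^t(I,a)\big)$. Conditioning on the $\sigma$-algebra $\mathcal F_{t-1}$ generated by the samples drawn before iteration $t$, the strategy $\sigma^t$ is fixed and unbiasedness gives $\bE[r^t(I,a)-\hat r^t(I,a)\mid \mathcal F_{t-1}]=0$, so the summands form a martingale difference sequence. Consequently the cross terms vanish and $\bE[(D^T(I,a))^2]=\sum_t \Var[r^t(I,a)-\hat r^t(I,a)] \le T\max_{t,I,a}\Var[r_i^t(I,a)-\hat r_i^t(I,a)]$. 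Chebyshev's inequality then yields $|D^T(I,a)|\le \sqrt{T\max\Var}/\sqrt{p}$ with probability $1-p$; dividing by $T$ gives the second term $\sqrt{\max\Var}/(\sqrt{p}\,\sqrt{T})$. Combining the two pieces and collecting the shared factor $|\cI_i||\cA_i|/\sqrt{T}$ gives the stated bound.

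I expect the martingale/concentration bookkeeping in the last step to be the main obstacle: one must argue carefully that $\sigma^t$ is measurable with respect to the past samples so that the conditional expectation of the discrepancy genuinely vanishes, and then make the probability accounting uniform over all $(I,a)$ (via the $\max$ over $t,I,a$ and a union-style argument) so that the single failure probability $p$ suffices for the whole bound. Getting the exact constants right — in particular the placement of $\sqrt{p}$ and the $|\cI_i||\cA_i|$ factor rather than $|\cI_i|\sqrt{|\cA_i|}$ — is the fiddly part; since the statement is imported verbatim as \cite[Theorem~2]{Gibson12probing}, I would ultimately defer those details to that reference.
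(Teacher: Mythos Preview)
The paper does not prove this theorem at all: it is stated with an explicit citation to \cite[Theorem~2]{Gibson12probing} and left at that, with no argument in the body or the appendix. So there is nothing to compare your sketch against in this paper; the ``paper's proof'' is simply the pointer to Gibson et al.

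Your outline is the standard route that underlies the cited result: the CFR decomposition $R_i^T \le \sum_I \max_a (R^T(I,a))^+$, the add-and-subtract split into the regret-matching-controlled piece $\hat R^T(I,a)$ and the zero-mean discrepancy $\sum_t(r^t-\hat r^t)$, and a second-moment/Chebyshev bound on the latter using the martingale-difference structure induced by $\mathcal F_{t-1}$-measurability of $\sigma^t$. That is faithful to how the external reference argues. Your own caveats are the right ones: the uniform control over all $(I,a)$ and the exact bookkeeping that produces $|\cI_i||\cA_i|$ and the $1/\sqrt{p}$ placement are where the constants are set, and those details live in \cite{Gibson12probing}, not here. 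Since this paper only quotes the theorem, deferring to that reference (as you do in your last sentence) is exactly what is required.
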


\subsection{Choice of Baselines}

How does one choose a baseline, given that we want these to be good estimates of the individual counterfactual values?
A common choice of the baseline in policy gradient algorithms is the mean value of the state, which is learned online~\cite{Mnih2016asynchronous}. 
Inspired by this, we choose a similar quantity: the average expected value $\bar{\hat{u}}_i(I_i, a)$. That is, in addition to accumulating regret for each $I$, average expected values are also tracked.

While a direct average can be tracked, we found that an exponentially-decaying average that places heavier weight on more recent samples to be more effective in practice. On the $k^{th}$ visit to $I$ at iteration $t$, 
\[
\bar{\hat{u}}^k_i(I_i, a) =
\left\{ \begin{array}{ll}
0 & \mbox{if $k = 0$} \\
 (1-\alpha)\bar{\hat{u}}^{k-1}_i(I_i, a) + \alpha\hat{u}^b_i(\sigma^t, I_i, a) & \mbox{if $k>0$} \end{array} \right.
\]
We then define the baseline $b_i(I_i, a) = \bar{\hat{u}}_i(I_i, a)$, and
\[
\hat{b}_i(I_i, a | z) =
\left\{ \begin{array}{ll}
 b_i(I_i, a) / \xi(I_i,a) & \mbox{if $ha \sqsubseteq z, h \in I_i$ } \\
 0 & \mbox{otherwise}.\end{array} \right.
\]
The baseline can therefore be thought as {\it local} to $I_i$ since it depends only on quantities defined and tracked at $I_i$. Note that $\bE_{a \sim \xi(I_i)}[\hat{b}_i(I_i, a|z)] = b_i(I_i, a)$ as required.

\subsection{Summary of the Full Algorithm}

We now summarize the technique developed above. One iteration of the algorithm consists of:

\begin{enumerate}
\item Repeat the steps below for each $i \in \cN - \{ c \}$.
\item Sample a trajectory $z \sim \xi$.
\item For each history $h \sqsubseteq z$ in reverse order (longest first):
    \begin{enumerate}
    \item If $h$ is terminal, simply return $u_i(h)$
    \item Obtain current strategy $\sigma(I)$ from Eq.~\ref{eq:rm} using cumulative regrets $R(I,a)$ where $h \in I$.
    \item Use the child value $\hat{u}^b_i(\sigma, ha)$ to compute $\hat{u}_i^b(\sigma, h)$ as in Eq.~\ref{eq:bootstrapped-ub}.
    
    \item If $\tau(h) = i$  then for $a \in A(I)$, compute $\hat{v}_i^b(\sigma, I, a) = \frac{\pi_{-i}(h)}{q(h)} \hat{u}_i^b(\sigma, ha)$ and accumulate regrets $R(I,a) \leftarrow R(I,a) + \hat{v}_i^b(\sigma, I,a) - \hat{v}_i^b(\sigma, I)$. 
    \item Update $\bar{\hat{u}}(\sigma, I_i, a)$.
    
    \item Finally, return $\hat{u}^b_i(\sigma, h)$.
    \end{enumerate}
\end{enumerate}

Note that the original outcome sampling is an instance of this algorithm. Specifically, when $b_i(I_i, a) = 0$, then $\hat{v}^b_i(\sigma, I, a) = \tilde{v}_i(\sigma, I, a)$. Step by step example of the computation is in the appendix.

\section{Experimental Results}
 
We evaluate the performance of our method on \defword{Leduc poker}~\cite{05uai-bayes}, a commonly used benchmark poker game. Players have an unlimited number of chips, and the deck has six cards, divided into two suits of three identically-ranked cards. There are two rounds of betting; after the first round a single public card is revealed from the deck. Each player antes 1 chip to play, receiving one private card. There are at most two bet or raise actions per round, with a fixed size of 2 chips in the first round, and 4 chips in the second round.

For the experiments, we use a vectorized form of CFR that applies regret updates to each information set consistent with the public information. The first vector variants were introduced in~\cite{Johanson12PCS}, and have been used in DeepStack and Libratus~\cite{Moravcik17DeepStack,Brown17Libratus}. See the appendix for more detail on the implementation. 
Baseline average values $\bar{\hat{u}}^b_i(I,a)$ used a decay factor of $\alpha = 0.5$. 
We used a uniform sampling in all our experiments, $\xi(I,a) = \frac{1}{|A(I)|}$.

We also consider the best case performance of our algorithm by using the oracle baseline. It uses baseline values of the true counterfactual values.
We also experiment with and without CFR+, demonstrating that our technique allows 
the CFR+ to be for the first time efficiently used with sampling.

%\textbf{Results.} 
\subsection{Convergence} 
We compared  MCCFR, MCCFR+, \gls{vrcfr}, \gls{vrcfr}+, and \gls{vrcfr}+ with the oracle baseline, see Fig.~\ref{fig:exploitability}. The variance-reduced VR-MCCFR and VR-MCCFR+ variants converge significantly faster than plain MCCFR. Moreover, the speedup grows as the baseline improves during the computation. A similar trend is shown by both \gls{vrcfr} and \gls{vrcfr}+, see Fig.~\ref{fig:speedup}. MCCFR needs hundreds of millions of iterations to reach the same exploitability as VR-MCCFR+ achieves in one million iterations: a 250-times speedup. VR-MCCFR+ with the oracle baseline significantly outperforms VR-MCCFR+ at the start of the  computation, but as time progresses and the learned baseline improves, the difference shrinks. After one million iterations, exploitability of VR-MCCFR+ with a learned baseline approaches the exploitability of VR-MCCFR+ with the oracle baseline. This oracle baseline result gives a bound on the gains we can get by constructing better learned baselines.

\begin{figure}[!ht]
\includegraphics[width=0.99\hsize]{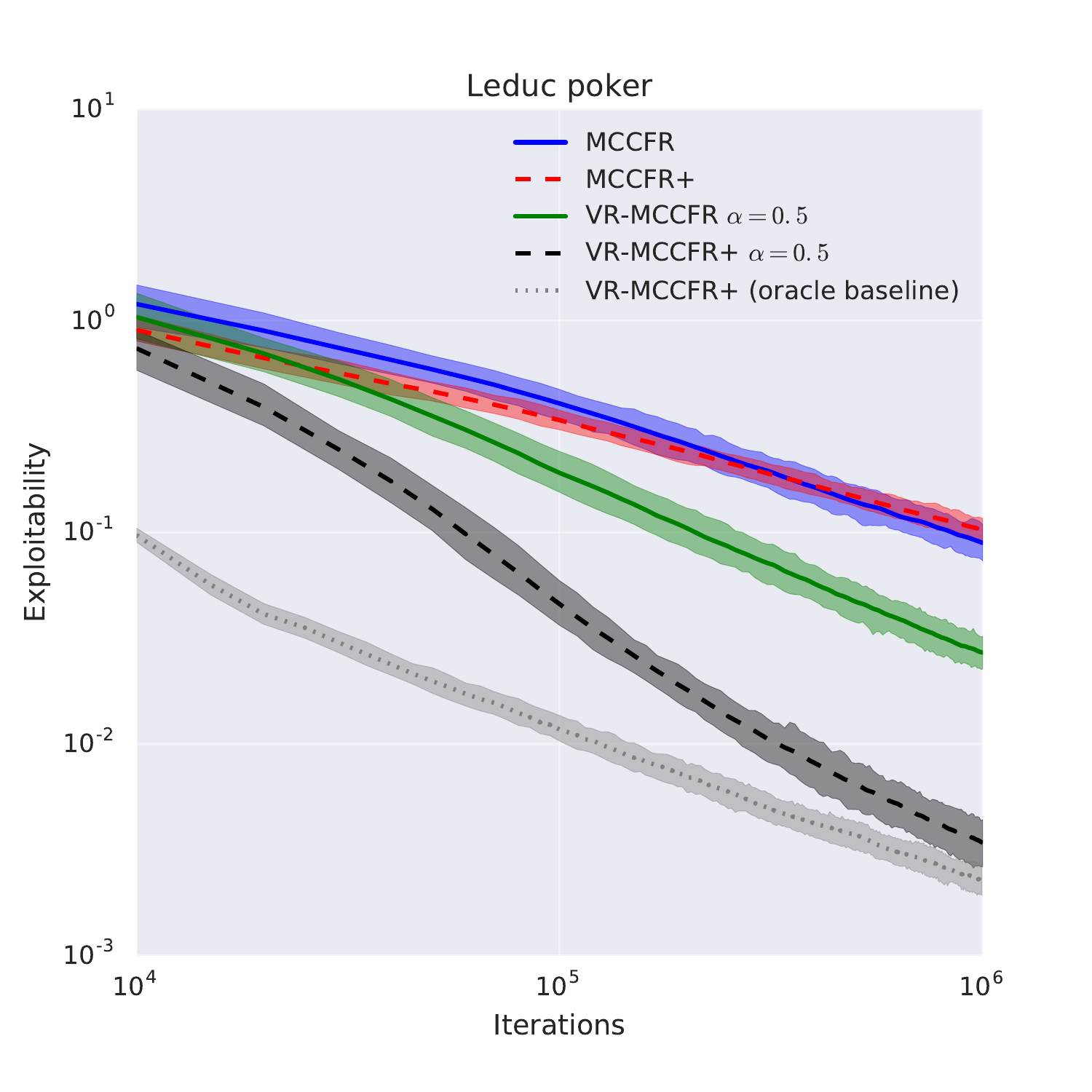}
\caption{Convergence of exploitability for different MCCFR variants on logarithmic scale. \gls{vrcfr} converges substantially faster than plain MCCFR. \gls{vrcfr}+ bring roughly two orders of magnitude speedup. \gls{vrcfr}+ with oracle baseline (actual true values are used as baselines) is used as a bound for \gls{vrcfr}'s performace to show possible room for improvement. When run for $10^6$ iterations \gls{vrcfr}+ approaches performance of the oracle version. The ribbons show 5th and 95th percentile over 100 runs.}
\label{fig:exploitability}
\end{figure}

\begin{figure}[!ht]
\includegraphics[width=0.99\hsize]{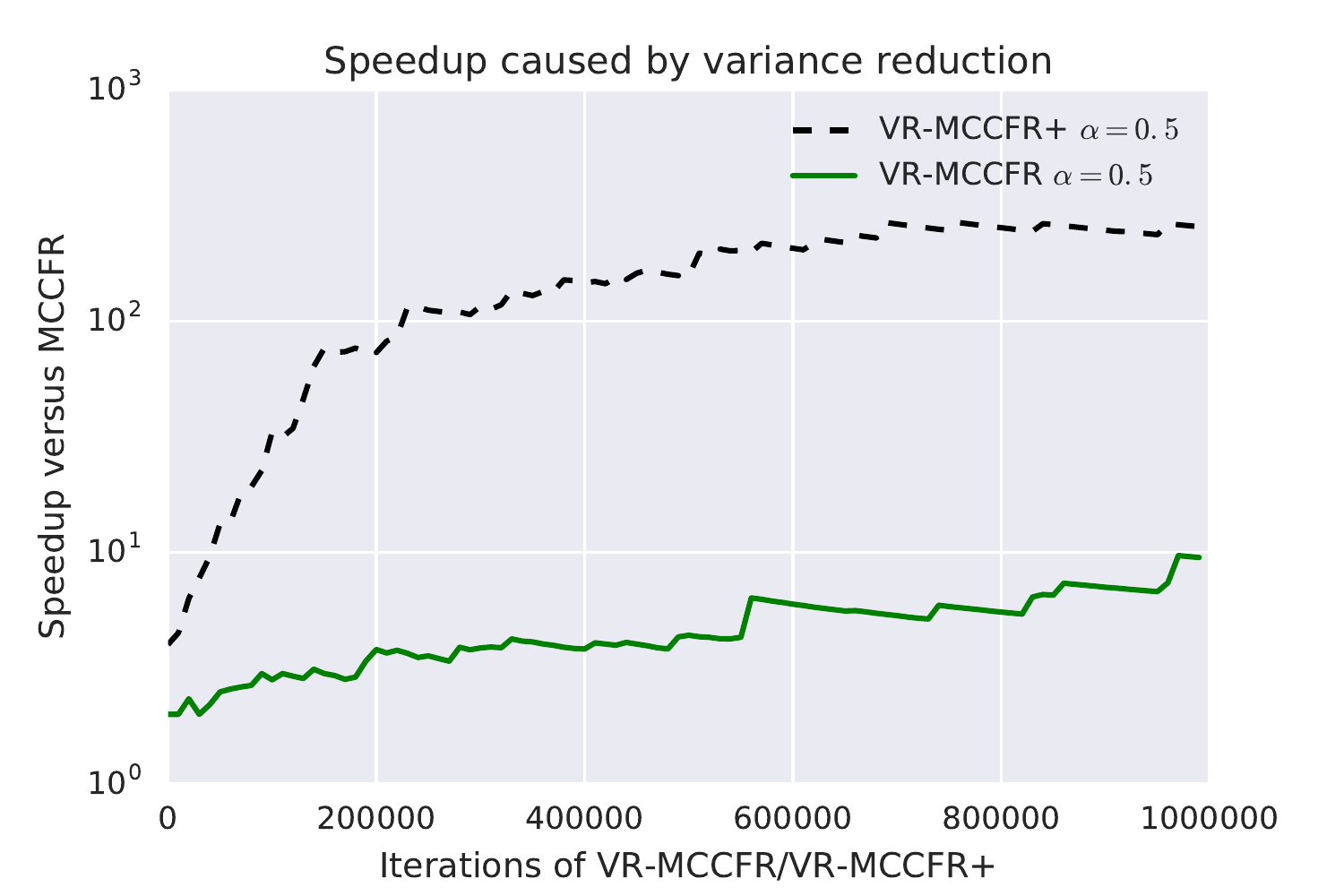}
\caption{Speedup of \gls{vrcfr} and \gls{vrcfr}+ compared to plain MCCFR. Y-axis show how many times more iterations are required by MCCFR to reach the same exploitability as \gls{vrcfr} or \gls{vrcfr}+.  
}
\label{fig:speedup}
\end{figure}

\subsection{Observed Variance}

To verify that the observed speedup of the technique is due to variance reduction, we experimentally observed variance of counterfactual value estimates for MCCFR+ and MCCFR, see Fig.~\ref{fig:variance-diagram}. We did that by sampling 1000 alternative trajectories for all visited information sets, with each trajectory sampling a different estimate of the counterfactual value. While the variance of value estimates in the plain algorithm  seems to be more or less constant, the variance of VR-MCCFR and VR-MCCFR+ value estimates is lower, and continues to decrease as more iterations are run. This confirms that the combination of baseline and bootstrapping is reducing variance, which implies better performance given the connection between variance and MCCFR's performance (Theorem~\ref{theorem:gibson}).

\begin{figure}[!ht]
\includegraphics[width=0.99\hsize]{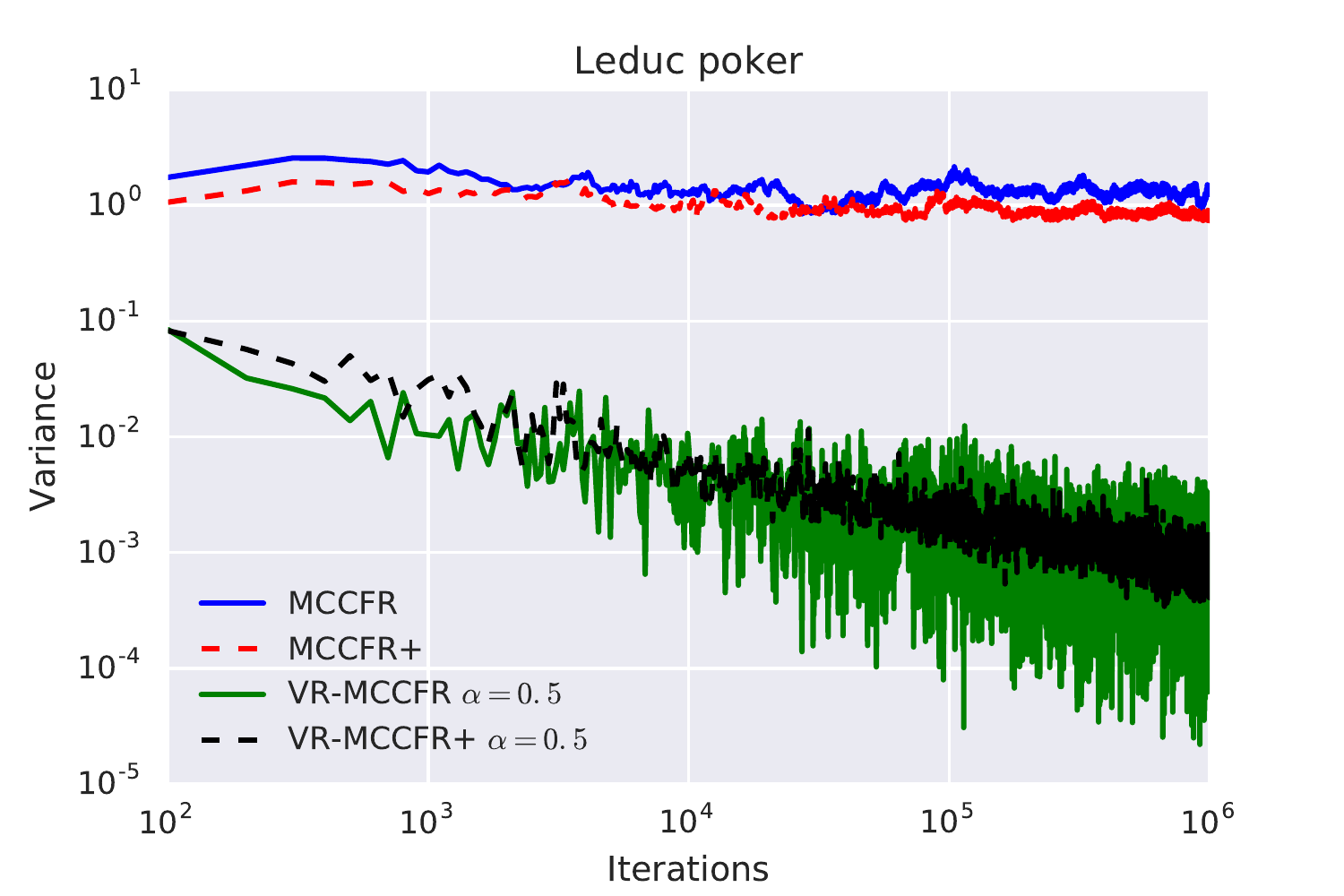}
\caption{Variance of counterfactual values in \gls{vrcfr} and plain MCCFR with both regret matching and regret matching+. The curves were smoothed by computing moving average over a sliding window of 100 iterations. }
\label{fig:variance-diagram}
\end{figure}
%\todo[inline]{bit confusing, what sliding window?}

\subsection{Evaluation of Bootstrapping and Baseline Dependence on Actions}

Recent work that evaluates action-dependent baselines in RL \cite{tucker2018mirage}, shows that there is often no real advantage compared to baselines that depend just on the state.  It is also not common to bootstrap the value estimates in  RL. Since  VR-MCCFR uses both of these techniques it is natural to explore the contribution of each idea. We  compared four VR-MCCFR+ variants: with or without bootstrapping and with baseline that is state or state-action dependant, see Fig.~\ref{fig:exploitability-detailed}. The conclusion is that the improvement in the performance is very small unless we use both bootstrapping and an action-dependant baseline.

\begin{figure}[!ht]
\includegraphics[width=0.99\hsize]{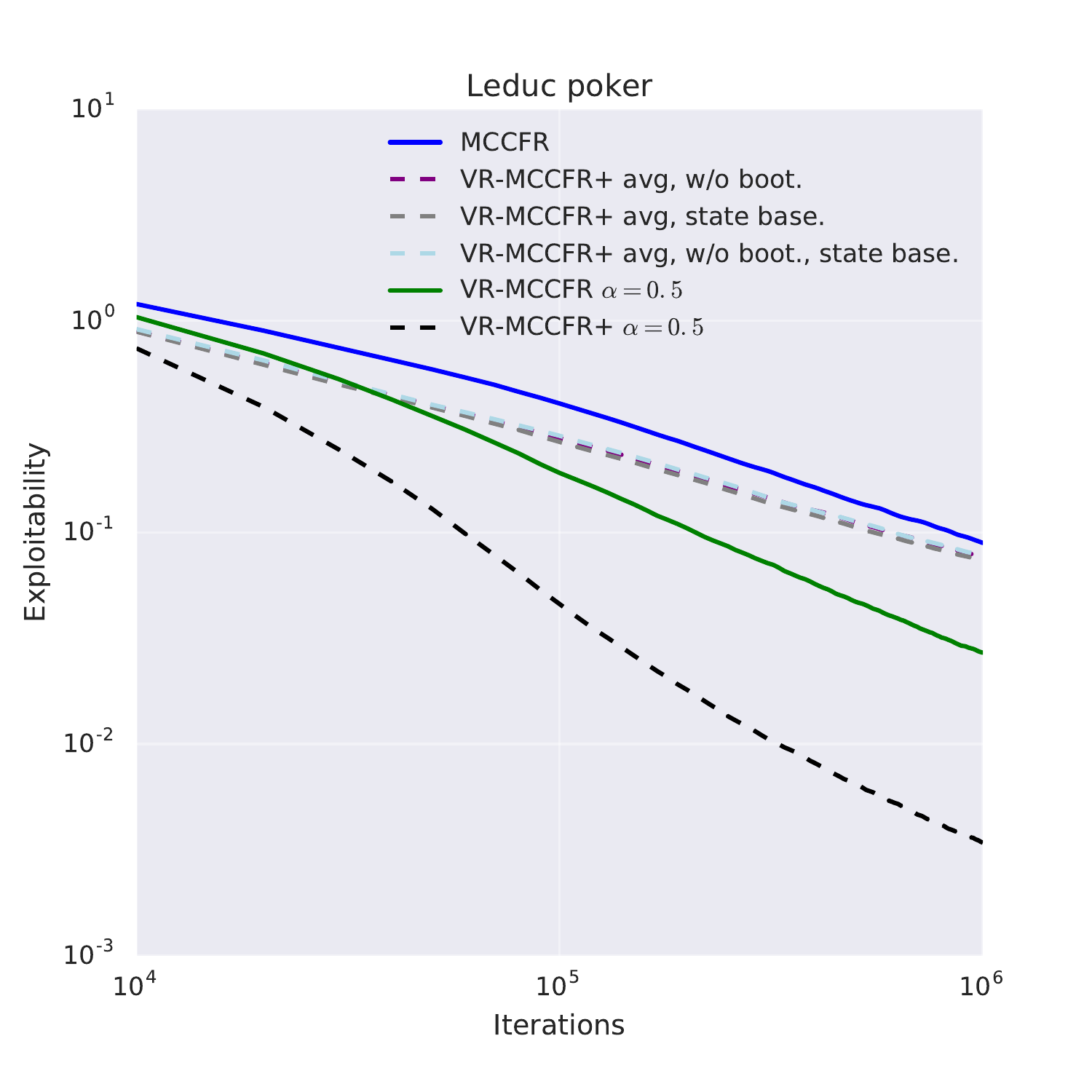}
\caption{Detailed comparison of different \gls{vrcfr} variants on logarithmic scale. The curves for MCCFR, \gls{vrcfr} and \gls{vrcfr}+ are the same as in the previous plot, the other lines show how the algorithm performs when using state baselines instead of state-action baselines, and without bootstrapping. All of these reduced variants perform better than plain MCCFR, however they are worse than full \gls{vrcfr}. This ablation study shows that the combination of all \gls{vrcfr} features is important for final performance.
}
\label{fig:exploitability-detailed}
\end{figure}

\section{Conclusions}

We have presented a new technique for variance reduction for Monte Carlo counterfactual regret minimization. This technique has close connections to existing RL methods of state and state-action baselines. 
In contrast to RL environments, our experiments in imperfect information games suggest that state-action baselines are superior to state baselines. Using this technique, we show that empirical variance is indeed reduced, speeding up the convergence by an order of magnitude.
The decreased variance allows for the first time CFR+ to be used with sampling, bringing the speedup to two orders of magnitude.

\bibliographystyle{named}
\bibliography{paper}

\newpage

\appendix
{\Large {\bf Appendices}}

\section{MCCFR and MCCFR+ comparison}

While it is known in that MCCFR+ is outperformed by MCCFR \cite{burch2017time}, we are not aware on any explicit comparison of these two algorithms in literature. Fig.~\ref{fig:mccfr_vs_mccfr_plus} shows experimental evaluation of these two techniques on Leduc poker.

\begin{figure}[!ht]
\includegraphics[width=0.99\hsize]{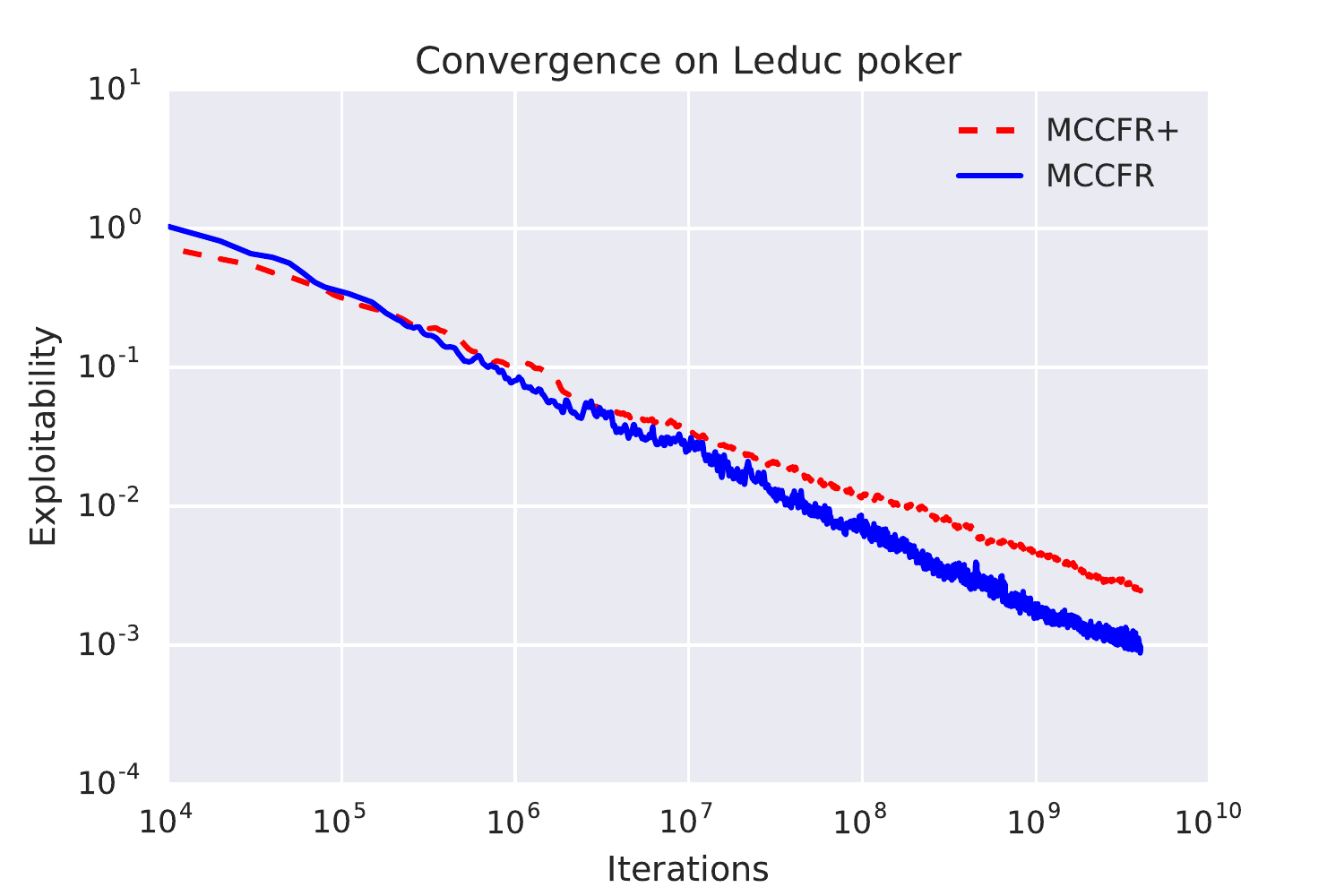}
\caption{Convergence of MCCFR and MCCFR+  on logarithmic scale. For the first $10^6$ iterations, MCCFR+ performs similllary to the MCCFR. After approximately $10^7$ iterations, the difference in favor of MCCFR starts to be visible and the gap in exploitability widens as the number of iterations grows. }
\label{fig:mccfr_vs_mccfr_plus}
\end{figure}

\section{Vector Form of CFR}

The first appearance of the vector form was presented in~\cite{Johanson11rgbr}. In this paper, the best response computation, needed to compute exploitability, was sped-up by re-defining the computation using the notion of a public tree. At the heart of a public tree is the notion of a \defword{public state} which contains a set of information sets whose histories are consistent with the public information revealed so far~\cite[Definition 2]{Johanson11rgbr}. This allowed the method to compute quantities for all information sets consistent with a public state at once (stored in vectors) and operations to compute them could be vectorized during a traversal of the public tree. There are also game-specific optimizations that could be applied at leaf nodes to asymptotically reduce the total computation necessary.

A similar construction was used in several sampling variants introduced in~\cite{Johanson12PCS}. Here, instead of computing necessary for best response, counterfactual values were vectorized and stored instead. The paper describes several ways to sample at various types of chance nodes (ones which reveal public information, or private information to each player), but the concept of a vectorized form of CFR was general. In fact, a vector form of vanilla CFR is possible in any game: when traversing down the tree, these vectors store the probability of reaching each information set (called a {\it range} in \cite{Moravcik17DeepStack}) and return vectors of counterfactual values. Both DeepStack and Libratus used vector forms of CFR and CFR+ in No-Limit poker.

For the MCCFR variants in this paper, the idea is the same as the previous sample variants. For any sequence of public actions, we concurrently maintain and update all information sets consistent with the sequence of public actions. For example in Leduc poker, six trajectories per player are maintained which all share the same sequence of public actions.

The main difference in our implementation is that baselines are kept as vectors at each public state, each representing a baseline for the information sets corresponding to the public state. Also, the average values tracked are counterfactual and normalized by the range. So, for example in Leduc, for five information sets in some public state, $(I_1, I_2, \ldots, I_5)$, quantity tracked by the baseline at this public state for action $a$ is:
\[
\frac{\hat{v}_i^b(\sigma, I_k, a)}{\sum_{k'}{\pi^{\sigma}_{opp}(I^{opp}_{k'})}},
\]
where $\pi^{\sigma}_{opp}$ is the reach probability of the opponent only (excluding chance), and $I^{opp}$ refers to the augmented information set belonging to the opponent at $I$.
Then, when using the baseline values to compute the modified counterfactual values, we need to multiply them by the current $\sum_{k'} \pi_{opp}^{\sigma}(I_{k'}^{opp})$ to get the baseline values under the current strategy $\sigma$.

\section{Proofs}

\subsection{Proof of Lemma~\ref{lemma:unbiasied}}

\begin{align*}
\bE[\hat{v}^b(\sigma, I,a)] & = \bE[\hat{v}_i(\sigma, I, a)] - \bE[\hat{b}_i(I, a)] + \bE[b_i(I, a)]\\
& = v_i(\sigma, I, a) - b_i(I, a) + b_i(I, a)\\
& = v_i(\sigma, I, a).~~~~~~~~~~~~~~~~~~~~~~~~~~~~~~~~~~~~~~~~~~~~~~~~~~~~~~~~~~~~\qed
\end{align*}

\subsection{Proof of Lemma~\ref{lemma:bootstrapped-unbiased}}

We begin by proving a few supporting lemmas regarding local expectations over actions at specific histories:
\begin{lemma}
\label{lemma:remove_baseline}
Given some $h \in \cH$, for any $z \in \cZ$ generated by sampling $\xi : \cH \mapsto \cA$ and all actions $a$, $\bE_{z \sim \xi}[\hat{u}_i^b(\sigma,h,a|z)] = \sum_{z, ha \sqsubseteq z} q(z)\hat{u}_i^b(\sigma,ha|z)/\xi(h,a)$:
\end{lemma}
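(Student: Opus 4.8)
The plan is to unfold the expectation directly from the piecewise definition~(\ref{eq:bootstrapped-ub}) of $\hat{u}^b_i(\sigma,h,a|z)$ and to show that every contribution of the baseline $b_i(I_i(h),a)$ cancels, leaving only the bootstrapped-value term which is exactly the claimed right-hand side. First I would write the expectation as an explicit sum over terminal histories, $\bE_{z\sim\xi}[\hat{u}^b_i(\sigma,h,a|z)] = \sum_{z\in\cZ} q(z)\,\hat{u}^b_i(\sigma,h,a|z)$, and then partition $\cZ$ according to the three cases of~(\ref{eq:bootstrapped-ub}): the $z$ with $ha\sqsubseteq z$, the $z$ with $h\sqsubset z$ but $ha\not\sqsubseteq z$, and the remainder, on which the summand is $0$ and can be dropped.

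Next I would substitute the piecewise values and group the terms by whether they carry the baseline $b := b_i(I_i(h),a)$ or the bootstrapped value $\hat{u}^b_i(\sigma,ha|z)$. The value-carrying part is already precisely $\frac{1}{\xi(h,a)}\sum_{z:\,ha\sqsubseteq z} q(z)\,\hat{u}^b_i(\sigma,ha|z)$, so it only remains to show the baseline part vanishes. For this I would use two elementary reach-probability identities: $\sum_{z:\,ha\sqsubseteq z} q(z) = q(ha) = \xi(h,a)\,q(h)$ and $\sum_{z:\,h\sqsubset z} q(z) = q(h)$, which together give $\sum_{z:\,h\sqsubset z,\,ha\not\sqsubseteq z} q(z) = q(h)-q(ha)$. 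Collecting the three baseline contributions, namely $b\,q(ha)$ from case one, $-\tfrac{b}{\xi(h,a)}\,q(ha)$ from the control-variate term, and $b\,(q(h)-q(ha))$ from case two, and then substituting $q(ha)=\xi(h,a)\,q(h)$, shows they sum to $b\,q(h)\,\xi(h,a) - b\,q(h) + b\,q(h) - b\,q(h)\,\xi(h,a) = 0$.

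The only real subtlety, and the step I would be most careful about, is the bookkeeping of the aggregate reach probabilities: recognizing that summing $q(z)$ over all terminals passing through a node collapses to the reach probability of that node, because the continuation probabilities below it sum to one, and that the event $\{h\sqsubset z\}$ decomposes disjointly into $\{ha\sqsubseteq z\}$ and $\{h\sqsubset z,\; ha\not\sqsubseteq z\}$. Once these identities are in place the cancellation is purely algebraic, and nothing about the baseline beyond its being a fixed quantity at $I_i(h)$ is used, which is exactly why an arbitrary baseline preserves the expectation. This lemma then feeds the induction for Lemma~\ref{lemma:bootstrapped-unbiased}, where the recursion on $\hat{u}^b_i(\sigma,ha|z)$ is unwound one level at a time.
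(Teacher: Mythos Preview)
Your proposal is correct and follows essentially the same route as the paper: both arguments expand the expectation over the three cases of Equation~(\ref{eq:bootstrapped-ub}), isolate the bootstrapped term $\sum_{z:\,ha\sqsubseteq z} q(z)\hat{u}^b_i(\sigma,ha|z)/\xi(h,a)$, and then use the identities $q(ha)=\xi(h,a)\,q(h)$ and $\sum_{z:\,h\sqsubset z} q(z)=q(h)$ to cancel the baseline contributions. The only cosmetic difference is that the paper groups the baseline terms as $(q(ha)-q(ha)/\xi(h,a))b + q(h)(1-\xi(h,a))b$ before cancelling, whereas you substitute $q(ha)=\xi(h,a)q(h)$ first; the algebra is the same.
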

\begin{proof}
$\hat{u}_i^b(\sigma,h,a|z)$ has three cases, from which we get $\bE_{z \sim \xi}[\hat{u}_i^b(\sigma,h,a|z)]$
\begin{eqnarray*}
& = & \sum_{z, ha \sqsubseteq z} q(z) \\
&   & \left( b_i(I_i(h),a) + \frac{-b_i(I_i(h),a) + \hat{u}_i^b(\sigma,ha|z)}{\xi(h,a)} \right) \\
& + & \sum_{z, h \sqsubset z, ha \not \sqsubseteq z} q(z)(b_i(I_i(h),a)) \\
& + & \sum_{h  \not  \sqsubseteq z} 0 \\
& = & \sum_{z, ha \sqsubseteq z} q(z)\hat{u}_i^b(\sigma,ha|z)/\xi(h,a) \\
& + & (q(ha)-q(ha)/\xi(h,a))b_i(I_i(h),a) \\
& + & q(h)(1-\xi(h,a))b_i(I_i(h),a) \\
& = & \sum_{z, ha \sqsubseteq z} q(z)\hat{u}_i^b(\sigma,ha|z)/\xi(h,a)
\end{eqnarray*}
\end{proof}

\begin{lemma}
\label{lemma:exp-basline-u}
Given some $h \in \cH$, for any $z \in \cZ$ generated by sampling $\xi : \cH \mapsto \cA$, the local baseline-enhanced estimate is an unbiased estimate of expected values for all actions $a$:
\begin{eqnarray*}
\bE_{z \sim \xi}[ \hat{u}_i^b(\sigma, h, a | z)] % & = & \hat{u}_i(\sigma, ha | z)\\
    & = & \bE_{z \sim \xi}[ \hat{u}_i(\sigma, h, a | z)].
\end{eqnarray*}
\end{lemma}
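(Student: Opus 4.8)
The plan is to reduce the stated action-level identity to a single history-level identity that is then proved by induction up the (finite, acyclic) game tree, with the baseline terms disposed of one level at a time by the already-established Lemma~\ref{lemma:remove_baseline}. The observation that makes this work is that both $\hat{u}^b_i(\sigma, g \mid z)$ and $\hat{u}_i(\sigma, g \mid z)$ vanish for every $z$ with $g \not\sqsubseteq z$ (the ``otherwise'' branch of (\ref{eq:bootstrapped-ub-history}) and (\ref{eq:bootstrapped-u-history})), so the full expectation over $z \sim \xi$ equals the partial sum over terminal histories passing through $g$. Consequently I can work throughout with $\bE_{z \sim \xi}[\hat{u}^b_i(\sigma, g \mid z)]$ and never separately track which trajectories are on-path.

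First I would record the trivial companion of Lemma~\ref{lemma:remove_baseline} for the plain estimator: directly from the two-case definition (\ref{eq:bootstrapped-u}), together with the vanishing-off-path fact,
\[
\bE_{z \sim \xi}[\hat{u}_i(\sigma, h, a \mid z)] = \bE_{z \sim \xi}[\hat{u}_i(\sigma, ha \mid z)]/\xi(h,a),
\]
while Lemma~\ref{lemma:remove_baseline} is exactly the same one-step recursion for the baseline-enhanced estimator, $\bE_{z \sim \xi}[\hat{u}^b_i(\sigma, h, a \mid z)] = \bE_{z \sim \xi}[\hat{u}^b_i(\sigma, ha \mid z)]/\xi(h,a)$, with the baseline contributions already cancelled. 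The stated claim will then follow the instant I establish the history-level invariant
\[
\bE_{z \sim \xi}[\hat{u}^b_i(\sigma, g \mid z)] = \bE_{z \sim \xi}[\hat{u}_i(\sigma, g \mid z)]
\]
for every history $g$: applying the two one-step relations at $g = h$ and invoking the invariant at the child $ha$ gives $\bE_z[\hat{u}^b_i(\sigma, h, a \mid z)] = \bE_z[\hat{u}^b_i(\sigma, ha \mid z)]/\xi(h,a) = \bE_z[\hat{u}_i(\sigma, ha \mid z)]/\xi(h,a) = \bE_z[\hat{u}_i(\sigma, h, a \mid z)]$.

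Then I would prove the invariant by structural induction from the leaves upward. For the base case $g = z$ terminal, both estimators equal $u_i(g)$ on the single contributing trajectory and $0$ otherwise, so both expectations equal $q(g)u_i(g)$. For the inductive step with $g$ nonterminal, the expansions (\ref{eq:bootstrapped-ub-history}) and (\ref{eq:bootstrapped-u-history}) give $\hat{u}^b_i(\sigma, g \mid z) = \sum_a \sigma(g,a)\,\hat{u}^b_i(\sigma, g, a \mid z)$ for \emph{all} $z$ (both sides are $0$ when $g \not\sqsubseteq z$, and $g \ne z$ since $g$ is nonterminal). Exchanging the finite sums over actions and trajectories yields $\bE_z[\hat{u}^b_i(\sigma, g \mid z)] = \sum_a \sigma(g,a)\,\bE_z[\hat{u}^b_i(\sigma, g, a \mid z)]$, and likewise for $\hat{u}_i$; applying the two one-step relations converts each into $\sum_a \frac{\sigma(g,a)}{\xi(g,a)}\,\bE_z[\hat{u}^b_i(\sigma, ga \mid z)]$ (respectively with $\hat{u}_i$), whereupon the induction hypothesis at each child $ga$ forces the two expressions to coincide.

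The genuine content — the cancellation of the baseline terms, including the off-trajectory term $b_i(I_i(h),a)$ arising when $ha \not\sqsubseteq z$ — is entirely absorbed into Lemma~\ref{lemma:remove_baseline}, which I am free to assume. What remains, and what I expect to be the main thing to get right, is the induction scaffolding: confirming that the history-to-action expansion is valid for all $z$ rather than only on-path ones, that finiteness and acyclicity of the extensive-form tree make the reverse induction well-founded, and that the passage from the history-level invariant to the stated action-level identity is precisely one application of the two one-step relations rather than a separate argument.
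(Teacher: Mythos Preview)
Your proposal is correct and follows essentially the same approach as the paper: a reverse (leaves-to-root) induction that invokes Lemma~\ref{lemma:remove_baseline} at each step to peel off the baseline contribution, together with the trivial companion identity for the plain estimator. The only organizational difference is that you carry the induction on a \emph{history}-level invariant $\bE_z[\hat{u}_i^b(\sigma,g\mid z)]=\bE_z[\hat{u}_i(\sigma,g\mid z)]$ and then deduce the action-level claim via one application of the two one-step relations, whereas the paper keeps the induction hypothesis at the action level (on $(h,a)$, measured by distance from $ha$ to a terminal) and implicitly passes through the history-level expansion inside the inductive step; the logical content and the role of Lemma~\ref{lemma:remove_baseline} are identical.
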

\begin{proof}
We prove this by induction on the maximum distance from $ha$ to any terminal. The base case is $ha \in \cZ$. $\bE_{z \sim \xi}[\hat{u}_i^b(\sigma, h, a | z)]$
\begin{eqnarray*}
& = & \sum_{z, ha \sqsubseteq z} q(z)\hat{u}_i^b(\sigma,ha|z)/\xi(h,a) ~~~~\mbox{by Lemma~\ref{lemma:remove_baseline}} \\
& = & \sum_{z, ha \sqsubseteq z} q(z)\hat{u}_i(\sigma,ha|z)/\xi(h,a) ~~~~\mbox{by Eq.~\ref{eq:bootstrapped-ub-history}} \\
& = & \bE_{z \sim \xi}[\hat{u}_i(\sigma, h, a | z)] ~~~~\mbox{by Eq.~\ref{eq:bootstrapped-u},~\ref{eq:bootstrapped-u-history}}
\end{eqnarray*}

Now assume for $i \ge 0$ that the lemma property holds for all $h'a'$ that are at most $j \le i$ steps from a terminal.  Consider history $ha$ being $i+1$ steps from some terminal, which implies that $ha \not\in \cZ$. We have $\bE_{z \sim \xi}[\hat{u}_i^b(\sigma, h, a | z)]$
\begin{eqnarray*}
& = & \sum_{z, ha \sqsubseteq z} q(z)\hat{u}_i^b(\sigma,ha|z)/\xi(h,a) ~~~~\mbox{by Lemma~\ref{lemma:remove_baseline}} \\
& = & \sum_{z, ha \sqsubseteq z} q(z) \sum_{a'}\sigma(ha,a')\hat{u}_i^b(\sigma,ha,a'|z)/\xi(h,a) \\
&   & \mbox{by Eq.~\ref{eq:bootstrapped-ub-history}} \\
& = & \sum_{z, ha \sqsubseteq z} q(z) \sum_{a'}\sigma(ha,a')\hat{u}_i(\sigma,ha,a'|z)/\xi(h,a) \\
&   & \mbox{by assumption} \\
& = & \sum_{z, ha \sqsubseteq z} q(z)\hat{u}_i(\sigma,ha|z)/\xi(h,a) ~~~~\mbox{by Eq.~\ref{eq:bootstrapped-u-history}} \\
& = & \bE_{z \sim \xi}[\hat{u}_i(\sigma, h, a | z)] ~~~~\mbox{by Eq.~\ref{eq:bootstrapped-u}}
\end{eqnarray*}
The lemma property holds for distance $i+1$, and so by induction the property holds for all $h$ and $a$.
\end{proof}

\begin{lemma}
\label{lemma:bootrapped-v-sampled-cfv}
Given some $h \in \cH$, for any $z \in \cZ$ generated by sampling $\xi : \cH \mapsto \cA$ and for all actions $a$, the local baseline-enhanced estimate is an unbiased estimate of the original sampled counterfactual value: $\bE_{z \sim \xi}[\hat{v}_i^b(\sigma, I_i(h), a | z)] = \bE_{z \sim \xi}[\tilde{v}_i(\sigma, I_i(h), a | z)]$.
\end{lemma}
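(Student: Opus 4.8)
The plan is to reduce the statement to the already-proven Lemma~\ref{lemma:exp-basline-u} and then to identify the baseline-free bootstrapped estimator with the ordinary outcome-sampling estimator. First I would note that the prefactor $\pi^\sigma_{-i}(h)/q(h)$ appearing in Eq.~\ref{eq:bootstrapped-vb} is deterministic — it depends only on the fixed $h$, not on the sampled $z$ — so by linearity of expectation
\[
\bE_{z\sim\xi}[\hat v^b_i(\sigma, h, a\,|\,z)] = \frac{\pi^\sigma_{-i}(h)}{q(h)}\,\bE_{z\sim\xi}[\hat u^b_i(\sigma, h, a\,|\,z)].
\]
Applying Lemma~\ref{lemma:exp-basline-u} then lets me swap the baseline-enhanced expected value for the baseline-free one, giving $\bE_{z\sim\xi}[\hat v^b_i(\sigma,h,a\,|\,z)] = \frac{\pi^\sigma_{-i}(h)}{q(h)}\bE_{z\sim\xi}[\hat u_i(\sigma,h,a\,|\,z)]$.

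The crux is then to verify the termwise identity $\frac{\pi^\sigma_{-i}(h)}{q(h)}\hat u_i(\sigma,h,a\,|\,z) = \tilde v_i(\sigma, h, a\,|\,z)$ (both sides vanishing unless $ha\sqsubseteq z$). I would establish this by unrolling the recursion of Eqs.~\ref{eq:bootstrapped-u} and~\ref{eq:bootstrapped-u-history} along the sampled trajectory: at each history only the single sampled child is nonzero, so the sum over actions collapses and each edge from $ha$ down to $z$ contributes a factor $\sigma(h',a')/\xi(h',a')$, terminating in $u_i(z)$. This telescopes to $\hat u_i(\sigma,h,a\,|\,z) = u^\sigma_i(ha,z)\,q(h)/q(z)$, using $q(z)=q(h)\,\xi(h,a)\,\pi^\xi(ha,z)$ and $u^\sigma_i(ha,z)=\pi^\sigma(ha,z)u_i(z)$; substituting, the $q(h)$ cancels and I recover $\pi^\sigma_{-i}(h)u^\sigma_i(ha,z)/q(z)$, which is exactly $\tilde v_i(\sigma,h,a\,|\,z)$ from Eq.~\ref{eq:sampled-cfv}. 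Equivalently, and more cheaply, I can observe that setting $b_i\equiv 0$ makes the three cases of Eq.~\ref{eq:bootstrapped-ub} collapse termwise to Eq.~\ref{eq:bootstrapped-u}, so the $b=0$ bootstrapped estimator \emph{is} the original MCCFR estimator, exactly as remarked in the algorithm summary.

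I expect the main obstacle to be the bookkeeping in this telescoping step: tracking precisely which sampling weights $\xi$ cancel against the factors of $q(z)$, and confirming the reach-probability convention. In particular I would be careful that the prefactor uses $\pi^\sigma_{-i}(h)$ (reach up to $h$) rather than $\pi^\sigma_{-i}(ha)$; these coincide at player $i$'s decision nodes, where $\tau(h)=i$ and the action $a$ belongs to player $i$ and hence does not enter the opponent reach, matching the definition of the counterfactual action value in Eq.~\ref{eq:action-dep-cfv}. Once the termwise identity is in hand, taking expectations over $z\sim\xi$ yields $\bE_{z\sim\xi}[\hat v^b_i(\sigma, I_i(h), a\,|\,z)] = \bE_{z\sim\xi}[\tilde v_i(\sigma, I_i(h), a\,|\,z)]$ immediately. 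This lemma then feeds directly into Lemma~\ref{lemma:bootstrapped-unbiased}, since $\tilde v_i$ is already known to be unbiased for $v_i(\sigma, I, a)$.
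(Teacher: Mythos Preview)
Your proposal is correct and follows essentially the same approach as the paper: pull the deterministic prefactor $\pi^\sigma_{-i}(h)/q(h)$ out of the expectation, apply Lemma~\ref{lemma:exp-basline-u} to replace $\hat u^b_i$ by $\hat u_i$, and then identify $\frac{\pi^\sigma_{-i}(h)}{q(h)}\hat u_i(\sigma,h,a\,|\,z)$ with $\tilde v_i(\sigma,h,a\,|\,z)$. The paper dispatches this last identification in one line by citing Eqs.~\ref{eq:sampled-cfv} and~\ref{eq:bootstrapped-u}, whereas you spell out the telescoping explicitly (and also note the shortcut that $b\equiv 0$ collapses Eq.~\ref{eq:bootstrapped-ub} to Eq.~\ref{eq:bootstrapped-u}); both are fine and your caution about $\pi^\sigma_{-i}(h)$ versus $\pi^\sigma_{-i}(ha)$ is well-placed.
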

\begin{proof}
First, $\bE_{z \sim \xi}[\hat{v}_i^b(\sigma, I_i(h), a | z)]$
\begin{eqnarray*}
& = & \bE_{z \sim \xi}\left[\frac{\pi_{-i}^{\sigma}(h)}{q(h)} \hat{u}_i^b(\sigma, h, a | z)\right] ~~~~\mbox{by Eq.~\ref{eq:bootstrapped-vb}}\\
& = & \frac{\pi_{-i}^{\sigma}(h)}{q(h)} \bE_{z \sim \xi}[ \hat{u}_i^b(\sigma, h, a | z)]\\
& = & \frac{\pi_{-i}^{\sigma}(h)}{q(h)} \bE_{z \sim \xi}[\hat{u}_i(\sigma, h, a | z)] ~~~~\mbox{by Lemma~\ref{lemma:exp-basline-u}}\\
& = & \bE_{z \sim \xi}[\tilde{v}_i(\sigma, I_i(h), a | z)] ~~~~\mbox{by Eq.~\ref{eq:sampled-cfv}, \ref{eq:bootstrapped-u}}.
\end{eqnarray*}
\end{proof}
\begin{proof}[Proof of Lemma 2]
The proof now follows directly:\\
$\bE_{z \sim \xi}[\hat{v}_i^b(\sigma, I, a | z)]$
\begin{eqnarray*}
& = &  \bE_{z \sim \xi}[\tilde{v}_i(\sigma, I, a | z)]~~~~\mbox{by Lemma~\ref{lemma:bootrapped-v-sampled-cfv}}\\
& = & v_i(\sigma, I, a)~~~~\mbox{by~\cite[Lemma 1]{Lanctot09mccfr}}.\\
\end{eqnarray*}
\end{proof}

\subsection{Proof of Lemma~\ref{lemma:zero-variance}}

We start by proving that given an oracle baseline, the baseline-enhanced expected value is always equal to the true expected value, and therefore has zero variance.

\begin{lemma}
\label{lemma:zero-variance-u}
Using an oracle baseline defined over histories, $b_i^*(h,a) = u^{\sigma}_i(ha)$, then for all $z$ such that $h \sqsubseteq z$, $\hat{u}^{b^*}_i(\sigma, h, a|z) = u_i^{\sigma}(ha)$.
\end{lemma}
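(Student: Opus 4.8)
The plan is to prove the statement by backward induction along the sampled trajectory $z$, simultaneously establishing the companion claim that the bootstrapped history value matches the true expected value, $\hat{u}^{b^*}_i(\sigma, h | z) = u_i^\sigma(h)$ for every $h \sqsubseteq z$. The crucial observation is that with the oracle baseline $b_i^*(h,a) = u_i^\sigma(ha)$, the stochastic correction term in Eq.~\ref{eq:bootstrapped-ub} cancels: if the bootstrapped child value $\hat{u}^{b^*}_i(\sigma, ha|z)$ already equals the baseline $u_i^\sigma(ha)$, then the numerator $\hat{u}^{b^*}_i(\sigma, ha|z) - b_i^*(h,a)$ is zero and the importance weight $1/\xi(h,a)$ multiplies nothing. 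This is exactly what drives the variance to zero, since the resulting estimate no longer depends on $z$.

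For the base case I take $h = z$ terminal, where $\hat{u}^{b^*}_i(\sigma, z|z) = u_i(z)$ by Eq.~\ref{eq:bootstrapped-ub-history}, and $u_i^\sigma(z) = u_i(z)$ because the only terminal extending $z$ is $z$ itself, with reach probability one. For the inductive step I fix a nonterminal $h \sqsubset z$ with unique on-path action $a^*$ satisfying $ha^* \sqsubseteq z$, and assume the history-value claim at the longer prefix $ha^*$. I then evaluate $\hat{u}^{b^*}_i(\sigma, h, a|z)$ through the two relevant cases of Eq.~\ref{eq:bootstrapped-ub}. For $a = a^*$ the correction term vanishes as described above, leaving $u_i^\sigma(ha^*)$; for every off-path $a \neq a^*$ the definition returns the baseline $b_i^*(h,a) = u_i^\sigma(ha)$ outright. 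In both cases $\hat{u}^{b^*}_i(\sigma, h, a|z) = u_i^\sigma(ha)$, which is precisely the claim of Lemma~\ref{lemma:zero-variance-u}.

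To close the induction I lift this back to the history value: by Eq.~\ref{eq:bootstrapped-ub-history}, $\hat{u}^{b^*}_i(\sigma, h|z) = \sum_a \sigma(h,a)\, \hat{u}^{b^*}_i(\sigma, h, a|z) = \sum_a \sigma(h,a)\, u_i^\sigma(ha)$, so I need the recursive identity $u_i^\sigma(h) = \sum_a \sigma(h,a)\, u_i^\sigma(ha)$. This follows from grouping the terminals $z'$ with $h \sqsubseteq z'$ by their first action at $h$ and factoring $\pi^\sigma(h,z') = \sigma(h,a)\,\pi^\sigma(ha,z')$; note it holds regardless of the owner of $h$ (player, opponent, or chance), since $\pi^\sigma$ and $\sigma(h,\cdot)$ both incorporate the chance distribution.

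I expect the only subtle point to be the bookkeeping of the joint induction: one cannot prove the action-value statement in isolation, because evaluating $\hat{u}^{b^*}_i(\sigma, h, a^*|z)$ requires the child history value $\hat{u}^{b^*}_i(\sigma, ha^*|z)$, which is why the history-value claim must be carried along as the inductive hypothesis. Everything else reduces to a routine case split on whether $ha \sqsubseteq z$.
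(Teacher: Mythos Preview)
Your proposal is correct and follows essentially the same approach as the paper: a backward induction that alternates between the action-value claim $\hat{u}^{b^*}_i(\sigma,h,a|z)=u_i^\sigma(ha)$ and the companion history-value claim $\hat{u}^{b^*}_i(\sigma,h|z)=u_i^\sigma(h)$, using the cancellation of the correction term under the oracle baseline. The only cosmetic difference is that the paper indexes the induction by the maximal distance from $ha$ to a terminal (and takes $ha\in\cZ$ as the base case) whereas you index by position along the fixed trajectory $z$ (taking $h=z$ as the base case); both orderings yield the same unwinding.
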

\begin{proof}
Similar to above, we prove this by induction on the maximum distance from $ha$ to $z$. The base case is $ha \in \cZ$. By assumption $h \sqsubseteq z$ so we have $\hat{u}_i^{b^*}(\sigma,h,a|z)$
\begin{eqnarray*}
& = &
 \left\{ \begin{array}{ll}
 b_i^*(h, a) + \frac{\hat{u}_i^{b^*}(\sigma, ha | z) - b_i^*(h, a)}{\xi(h,a)} & \mbox{if $ha = z$} \\
 b_i^*(h, a) & \mbox{otherwise} \\
\end{array} \right. \\
&   & \mbox{by Eq.~\ref{eq:bootstrapped-ub}} \\
& = &
 \left\{ \begin{array}{ll}
 u_i^\sigma(ha) + \frac{u_i^\sigma(ha) - u_i^\sigma(ha)}{\xi(h,a)} & \mbox{if $ha = z$} \\
 u_i^\sigma(ha) & \mbox{otherwise} \\
\end{array} \right. \\
&   & \mbox{by Eq.~\ref{eq:bootstrapped-ub-history} and definition of $b_i^*(h,a)$} \\
& = & u_i^\sigma(ha)
\end{eqnarray*}

Now assume for $i \ge 0$ that the lemma property holds for all $h'a'$ that are at most $j \le i$ steps from a terminal. Consider history $ha$ being $i + 1$ steps from some terminal, which implies $ha \not \in \cZ$. We have
\begin{eqnarray}
\label{eq:oracle_baseline_history}
\hat{u}_i^{b^*}(\sigma,ha|z) = u_i^\sigma(ha)
\end{eqnarray}
because 
$\hat{u}_i^{b^*}(\sigma,ha|z)$
\begin{eqnarray*}
& = & \sum_{a'} \sigma(ha,a')\hat{u}_i^{b^*}(\sigma,ha,a'|z) ~~~~\mbox{by Eq.~\ref{eq:bootstrapped-ub-history}} \\
& = & \sum_{a'} \sigma(ha,a')u_i^\sigma(haa') ~~~~\mbox{by assumption} \\
& = & u_i^\sigma(ha) ~~~~\mbox{by definition of $u_i^\sigma$}
\end{eqnarray*}

We now look at $\hat{u}_i^{b^*}(\sigma,h,a|z)$
\begin{eqnarray*}
& = &
 \left\{ \begin{array}{ll}
 u_i^\sigma(ha) + \frac{\hat{u}^{b^*}_i(\sigma, ha | z) - u_i^\sigma(ha)}{\xi(h,a)} & \mbox{if $ha \sqsubset z$} \\
 u_i^\sigma(ha) & \mbox{otherwise} \\
\end{array} \right. \\
&   & \mbox{by Eq.~\ref{eq:bootstrapped-ub} and definition of $b_i^*(h,a)$} \\
& = &
 \left\{ \begin{array}{ll}
 u_i^\sigma(ha) + \frac{u_i^\sigma(ha) - u_i^\sigma(ha)}{\xi(h,a)} & \mbox{if $ha \sqsubset z$} \\
 u_i^\sigma(ha) & \mbox{otherwise} \\
\end{array} \right. \\
&   & \mbox{by Eq.~\ref{eq:oracle_baseline_history}} \\
& = & u_i^\sigma(ha)
\end{eqnarray*}
The lemma property holds for distance $i+1$, and so by induction the property holds for all $h$ and $a$.
\end{proof}

\begin{proof}[Proof of Lemma~\ref{lemma:zero-variance}]
Given $z$ such that $h \sqsubseteq z$, we have $\hat{v}_i^*(\sigma,h,a|z)$
\begin{eqnarray*}
& = & \frac{\pi_{-i}^\sigma(h)}{q(h)}\hat{u}_i^{b^*}(\sigma,h,a|z) ~~~~\mbox{by Eq.~\ref{eq:bootstrapped-vb}} \\
& = & \frac{\pi_{-i}^\sigma(h)}{q(h)}u_i^\sigma(ha) ~~~~\mbox{by Lemma~\ref{lemma:zero-variance-u}} \\
\end{eqnarray*}
None of the terms above depend on $z$, and so we have $\Var_{h,z \sim \xi, h \in I, h \sqsubseteq z}[\hat{v}_i^*(\sigma,h,a|z)] = 0$. Note as well that $\pi_{-i}^\sigma(h)u_i^\sigma(ha)$ corresponds to the terms in the summation of Equation~\ref{eq:action-dep-cfv}, so abusing notation, we have $\hat{v}_i^*(\sigma,h,a|z)=v_i(\sigma,h,a)/q(h)$: the counterfactual value of taking action $a$ at $h$, with an importance sampling weight to correct for the likelihood of reaching $h$.
\end{proof}

In MCCFR, the optimal baseline $b^*$ is not known, as it would require traversing the entire tree, taking away any advantages of sampling. However, $b^*$ can be approximated (learned online), which motivates the choice for tracking its average value presented in the main part of the paper.

\section{Kuhn Example}

In this section, we present a step-by-step example of one iteration of the algorithm on Kuhn poker~\cite{wiki:kuhn}. Kuhn poker is a simplified version of poker with three cards and is therefore suitable for demonstration purposes. Table~\ref{tab:kuhn-example} show forward pass of \gls{vrcfr} algorithm, Table~\ref{tab:kuhn-example-backward} shows backward pass.

\newcommand{\ke}[1]{\raisebox{-.5\height}{\includegraphics[page=#1]{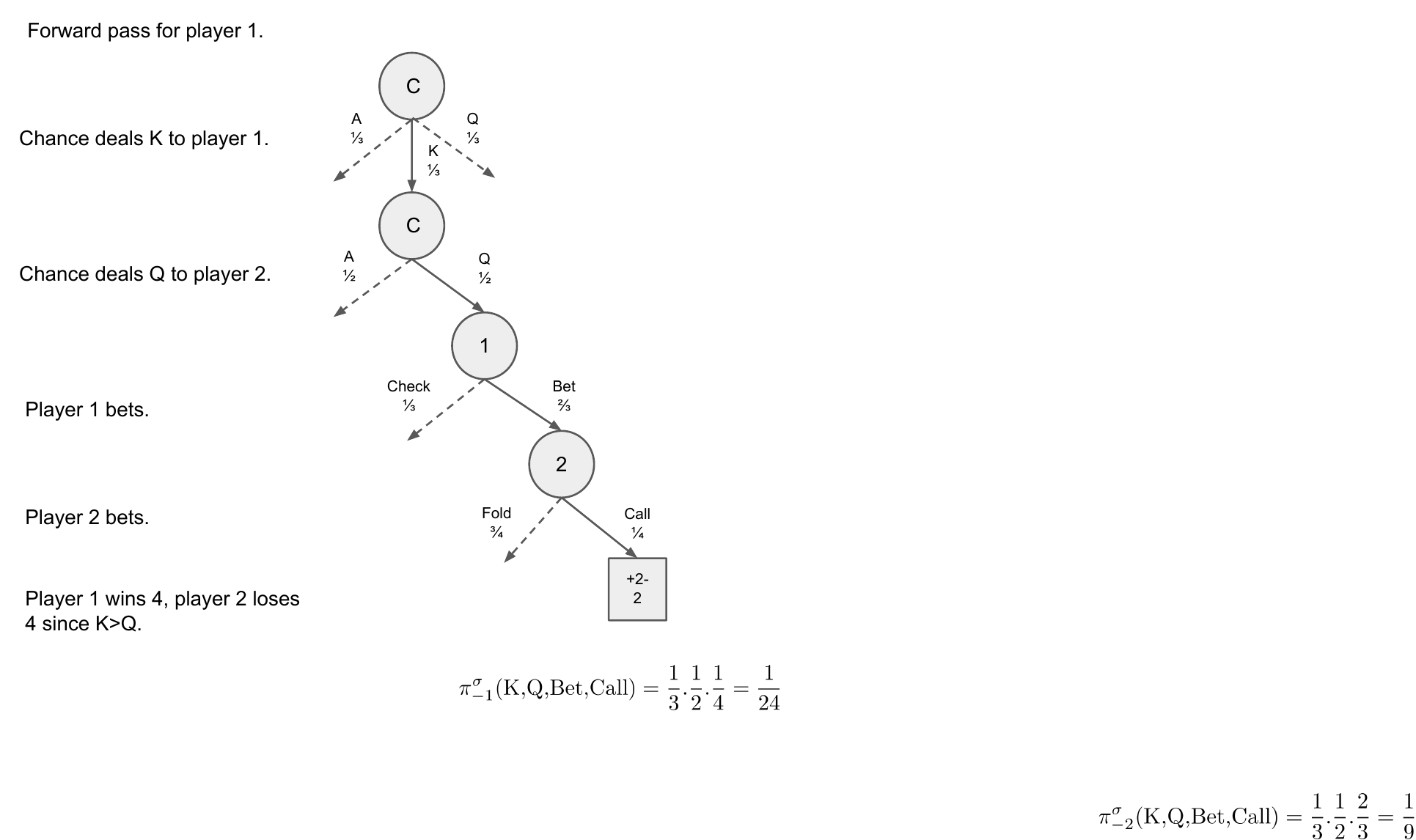}}}

\newcommand{\myline}{\cline{2-7}}

\newcommand{\notused}{\cellcolor{gray!20}}

\begin{table*}[t]

\centering
\begin{tabular}{|c|c|c|c|c|c|c|}
\hline
& \multicolumn{6}{c|}{Forward pass} \\ \myline
& $h$  & Game tree trajectory  & $\pi_{-1}^\sigma(h)$   & $q(h)$  & $I_1 = I_1(h)$  & $I_2 = I_2(h)$             \\ 
& \scriptsize{History} & & \scriptsize{Reach prob.} & \scriptsize{Sampling prob.} & \scriptsize{Infoset for Pl1} & \scriptsize{Infoset for Pl2} \\ \hline \hline
%Update rule &            &         &   &           &      &       \\ \hline
\multirow{5}{*}{\begin{minipage}{1cm} \ke{10} \end{minipage}}  & $\emptyset$ & \ke{5}        & 1                    & 1                        & \notused $\emptyset$    & \notused $\emptyset$       \\ 
\myline
& K     & \ke{6}        & $\frac{1}{3}$        & $\frac{1}{3}$      & K     & ?      \\ \myline
& KQ    & \ke{7}       & $\frac{1}{6}$        & $\frac{1}{6}$       & K?    & ?Q      \\ \myline
& KQB   & \ke{8}    & $\frac{1}{6}$        & $\frac{1}{12}$         & K?B   & ?QB       \\ \myline
& KQBC  & \ke{9}    & $\frac{1}{24}$       & $\frac{1}{24}$         & \notused K?BC  & \notused ?QBC      \\ \hline
\end{tabular}

\caption{Detailed example of updates computed for player 1 in Kuhn poker during forward pass of the algorithm. Backward pass that uses these values is shown in Table~\ref{tab:kuhn-example-backward}. In our representation history $h$ is a concatenation of all public and private actions. The game tree trajectory column shows the path in  the game tree that was sampled. Solid arrows denote sampled actions while dashed arrows show other available actions, all actions have their probability under current strategy $\sigma$ next to them. The sampled history in this case is: chance deals (K)ing to player 1, chance deals (Q)ueen to player 2, player 1 (B)ets, player 2 (C)alls. We will use shorter notation $KQBC$ to refer to this history. For each history $h$ reach probability $\pi_{-1}^\sigma(h)$ shows how likely the history is reached when player 1 plays in a way to get to this history. The sampling probabilities $q(h)$ are computed following sampling policy $\xi$ which is uniform in this case, i.e. for each history all available actions have the same probability that they will be sampled. The last two columns show augmented information sets for each player in each history. For example for player 1 history KQB is represented by information set K?B since he does not know what card was dealt to PLAYER 2. Light gray background marks cells where the values are well defined however they are not used in our example update for player 1.}
\label{tab:kuhn-example}
\end{table*}

\newcommand{\uKQBC}{\begin{tabular}{@{}c@{}} $\mathbf{\hat{u_1}^b({ \scriptstyle \sigma, h, | z})} = u_1(h) $

\\  $=2$

\end{tabular}}

%________________________KQB______________________________________________________________________________

\newcommand{\uKQB}{\begin{tabular}{@{}c@{}} $\mathbf{\hat{u}^b_1(\sigma, h | z)} =$
\\ $\sum_{a} \sigma(h,a) \hat{u}^b_1(\sigma, h, a | z)$
\\ $=\frac{3}{4} * (-2) + \frac{1}{4} * 3$
\\ $=-\frac{3}{4}$ \end{tabular}}

\newcommand{\uKQBc}{\begin{tabular}{@{}c@{}} $\mathbf{\hat{u_1}^b({ \scriptstyle \sigma, h,C | z})} =$
\\ $\frac{\hat{u_1}^b({ \scriptstyle \sigma, hC | z})  - b (\scriptstyle I_1, c)}{\xi{({  \scriptstyle h,C})}} + b({ \scriptstyle  I_1, C})$

\\  $=\frac{2-1}{\frac{1}{2}} + 1$

\\ $=3$

\end{tabular}}

\newcommand{\uKQBf}{\begin{tabular}{@{}c@{}} $\mathbf{\hat{u_1}^b({ \scriptstyle \sigma, h,F | z})} =b({ \scriptstyle  I_1, F})$

\\  $=-2$

\end{tabular}}

\newcommand{\uKQBa}{\begin{tabular}{@{}c@{}} \uKQBc

\\ \uKQBf \end{tabular}}

%_________________________KQ________________________________________________________________________________

%_________________________U____________________

\newcommand{\uKQ}{\begin{tabular}{@{}c@{}} $\mathbf{\hat{u}^b_1(\sigma, h | z)}=$ 
\\ $\sum_{a} \sigma(h,a) \hat{u}^b_1(\sigma,h, a|z)$
\\ $=\frac{1}{3} * (-1) + \frac{2}{3} * (-2)$
\\ $=-\frac{5}{3}$ \end{tabular}}

\newcommand{\uKQb}{\begin{tabular}{@{}c@{}} $\mathbf{\hat{u_1}^b({  \sigma,  h,B | z})}=$
\\ $\frac{\hat{u_1}^b({ \scriptstyle  \sigma,  hB | z})  - b (\scriptstyle I_1, B)}{\xi{({  \scriptstyle h,B})}} + b({ \scriptstyle  I_1, B})$

\\  $=\frac{-\frac{3}{4}-0.5}{\frac{1}{2}} + 0.5$

\\ $=-2$

\end{tabular}}

\newcommand{\uKQc}{\begin{tabular}{@{}c@{}} $\mathbf{\hat{u_1}^b({ \scriptstyle \sigma, h,C  |z})} = b({ \scriptstyle  I_1, C})$

\\  $=-1$

\end{tabular}}

\newcommand{\uKQa}{\begin{tabular}{@{}c@{}} \uKQb

\\ \uKQc \end{tabular}}

%_______________________V____________________________

\newcommand{\vKQb}{\begin{tabular}{@{}c@{}} $\mathbf{\hat{v}^b_1(\sigma, I_1, B | z)} =$
\\ $\frac{\pi^{\sigma}_{-1}(h)}{q(h)} \hat{u}^b_1(\sigma, h, B | z)$

\\  $=\frac{\frac{1}{6}}{\frac{1}{6}} * (-2)$

\\ $=-2$

\end{tabular}}

\newcommand{\vKQc}{\begin{tabular}{@{}c@{}} $\mathbf{\hat{v}^b_1(\sigma, I_1, C | z)} =$
\\ $\frac{\pi^{\sigma}_{-1}(h)}{q(h)} \hat{u}^b_1(\sigma, h, C | z)$

\\  $=\frac{\frac{1}{6}}{\frac{1}{6}} * (-1)$

\\ $=-1$

\end{tabular}}

\newcommand{\vKQa}{\begin{tabular}{@{}c@{}} \vKQb

\\ \vKQc \end{tabular}}

\newcommand{\mylineb}{\cline{2-6}}
\newcommand{\keb}[1]{\raisebox{-.5\height}{\includegraphics[page=#1]{kuhn-b.pdf}}}

% \begin{sidewaystable}
\begin{table*}[t]

\centering
\begin{tabular}{|c|c|c||c|c|c|}
\hline
& & \multicolumn{4}{|c|}{Backward pass}\\ 
& $h$          & Game tree trajectory & $\hat{u_1}^b(\sigma, h,a|z)$         & $\hat{u_1}^b(\sigma, h|z)$   & $\hat{v_1}^b(\sigma, I_1, a | z)$ \\ 

& \scriptsize{History} & & \scriptsize{Sampled corrected history-action utility} & \scriptsize{Sampled corrected history utility} & \scriptsize{Sampled corrected cf-value}  \\ \hline

Def.&            &         & Eq.~\ref{eq:bootstrapped-ub}  &  Eq.~\ref{eq:bootstrapped-ub-history}        &  Eq.~\ref{eq:bootstrapped-vb}           \\ \hline \hline

\multirow{5}{*}{\begin{minipage}{1cm} \ke{11} \end{minipage}} & $\emptyset$         & \keb{5}          &   \notused  & \notused     & \notused    \\ 
\mylineb
& K            & \keb{6}          & \notused  & \notused & \notused       \\ \mylineb
& KQ          & \keb{7} & \uKQa      &  \uKQ  \notused   & \vKQa      \\ \mylineb
& KQB      &   \keb{8} & \uKQBa  & \uKQB     &  \notused       \\ \mylineb
& KQBC & \keb{9}    & \notused     &  \uKQBC   &    \notused   \\ \hline

\end{tabular}
\caption{The backward pass starts by evaluating utility of the terminal history: $\hat{u_1}^b(\sigma,KQBC|KQBC) = +2$ since player 1 has (K)ing which is better card than opponent's (Q)ueen. In the next step computation updates values for history $KQB$. Expected baseline corrected history-action value $\hat{u_1}^b(\sigma,KQB,Call|KQBC)$ is computed based on current sample and then used together with $\hat{u_1}^b(\sigma,KQB,Fold|KQBC)$ to compute $\hat{u_1}^b(\sigma,KQB|KQBC)$. When updating values for history KQ baseline corrected sampled counterfactual values are computed based on just updated $\hat{u_1}^b(\sigma,KQ,Bet|KQBC)$ for the sampled Bet action and on a baseline value $\hat{u_1}^b(\sigma,KQ,Check|KQBC)$ for Check action that was not sampled. Reach probability $\pi_{-1}^\sigma(KQ)$ and sampling probability $q(KQ)$ that are also needed to compute counterfactual-values $\hat{v_1}^b(\sigma, K?, a | KQBC)$ were already computed in the forward pass. The counterfactual values are then used to compute actions' regrets (Eq.~\ref{eq:regret}) which is not shown in the table. Values in cell with light gray background are not used in computation of $\hat{v_1}^b(\sigma, K?, a | KQBC)$.}
\label{tab:kuhn-example-backward}
\end{table*}
% \end{sidewaystable}

\end{document}